\newtheorem{theorem}{Theorem}[section]
\newtheorem{lemma}{Lemma}[section]
\newcommand{\tab}{\hspace{.25in}}
\newcommand\QED{\ifhmode\allowbreak\else\nobreak\fi
\quad\nobreak$\Box$\medbreak}
\newcommand{\proofstart}{\par\noindent \emph{Proof:} }
\newcommand{\proofend}{\QED\par}
\newenvironment{proof}{\proofstart}{\proofend}
\def\ignore#1{\relax}
\long\gdef\boxit#1{\vspace{5mm}\begingroup\vbox{\hrule\hbox{\vrule\kern3pt
\vbox{\kern3pt#1\kern3pt}\kern3pt\vrule}\hrule}\endgroup}
\begin{document}

\title{Approximation Algorithms for the Traveling Repairman and Speeding Deliveryman Problems}
\author{Greg N. Frederickson and Barry Wittman}

\maketitle    

\begin{abstract}
Constant-factor, polynomial-time approximation algorithms are presented
for two variations of the traveling salesman problem with time windows.  
In the first variation, the traveling repairman problem, 
the goal is to find a tour that visits the maximum possible 
number of locations during their time windows.  
In the second variation, the speeding deliveryman problem, 
the goal is to find a tour that uses the minimum possible 
speedup to visit all locations during their time windows.  
For both variations, the time windows are of unit length,
and the distance metric is based on a weighted, undirected graph.  
Algorithms with improved approximation ratios are given for the case
when the input is defined on a tree rather than a general graph.
The algorithms are also extended to handle time windows whose lengths fall in any bounded range.
\end{abstract}

\section{Introduction}

The traveling salesman problem (TSP) has served as the
archetypal hard combinatorial optimization problem that attempts
to satisfy requests spread over a metric space \cite{Lawler}.
Yet, the TSP is not a perfect model of real life.
In particular, a salesman may not have enough time
to visit all desired locations.
Furthermore, a visit to any particular location may be of value
only if it occurs within a certain specified interval of time.
We use the term \emph{repairman problem} to describe the class of problems that add time constraints to the TSP. 

We consider a fundamental version of such a repairman problem,
in which the repairman is presented with a set of
\emph{service requests}.
Each service request is located at a node in a weighted, undirected graph 
and is assigned a \emph{time window} during which it is valid.  Note that multiple service requests may share the same node as a location but have different time windows.
The repairman may start at any time from any location and stop similarly.
(This latter assumption is at variance with
much of the preceding literature about the repairman problem \cite{Bansal,Bar-Yehuda}. 
We choose to frame our problem without specifying initial and final locations 
because doing so leads to an elegant solution that gives additional insight into such problems.)

We handle two variations of our problem.
In the first, when a repairman visits the location of a service request during its time window he performs a \emph{service event}, and each such event
yields a specified \emph{profit}.
A \emph{service run} is a feasible sequence of service events that a repairman can make at a given speed.
The goal of the repairman is to find a service run that satisfies a subset of requests with the maximum total profit possible.
On the other hand, a \emph{service tour} is a service run that satisfies \textbf{all} service requests.
Thus, in this second variation, the service provider tries to minimize the \emph{speed} necessary to make a service tour.
Note that there is some minimum speed below which it is not possible to visit all requests.
We call this variation the \emph{speeding deliveryman problem},
recognizing that, for example, a pizza delivery driver may need to hurry
to deliver his or her set of orders in a timely manner.
We seem to be the first to frame this second problem in terms of speedup, 
a refreshing change from the standard emphasis on distance traveled
or profit achieved.

For both variations,
we focus primarily on the case in which all time windows
are the same length (i.e., unit-time),
and all profits for service events are identical.
Additionally, we refer to each service event as being instantaneous,
although positive service times can be absorbed into the structure of the graph in many cases.
These restrictions still leave problems that are APX-hard for a metric graph,
via a simple reduction from TSP, which has been shown to be APX-hard \cite{Papadimitriou2}.

Our goal is thus to find polynomial-time approximation algorithms.
For the repairman, our algorithms produce a service run whose profit is within
a constant factor of the profit for an optimal service run.
For the deliveryman, they produce a service tour whose maximum speed
is within a constant factor of the optimum speed, which accommodates all requests.
These variations contrast neatly,
as the repairman is a maximization problem
while the deliveryman is a minimization problem.  
To the best of our knowledge, we are the first to find
approximation algorithms for either problem that get within a constant factor for a general metric, albeit when time windows are the same length.  Thus, we establish membership in APX for these specific problem versions.

Our repairman and deliveryman problems are NP-hard even in the case that the service network 
is an edge-weighted tree rather than a general (weighted) graph, as we shall show.
This property is particularly notable, since of course the TSP is polynomial-time solvable on tree networks.
In this simpler context of a tree, we give approximation algorithms with improved constants and faster polynomial running times for our problems.

Although we seem to be the first to study the
speeding deliveryman problem,
we are not the first to consider the repairman problem,
which is a generalization of a host of repairman,
deliveryman and traveling salesman problems such as those in \cite{Bansal,Bar-Yehuda,Chekuri,Karuno3,Tsitsiklis}.  
Much work has been done on related problems in a metric space on the line.
Assuming unit-time windows, a ${4+\epsilon}$-approximation was given
for the repairman on a line in \cite{Bar-Yehuda}.
We improve this approximation to $3$ and in a more general setting, a tree.
We are the first to give poly-time constant-ratio
algorithms for the unit time window repairman problem
on a tree or on a graph.

For general metric spaces and general time windows together in the rooted problem, 
an $O(\log^2 n)$-approximation is given in \cite{Bansal}.  An $O(\log L)$-approximation is given in \cite{Chekuri3}, for the case that all time window start and end times are integers, where $L$ is the length of the longest time window.  In contrast, a constant approximation is given in \cite{Chekuri},
but only when there are a constant number of different time windows.  Following the initial publication of our work in \cite{Frederickson3}, an extension was given in \cite{Chekuri3} that gives an $O(\log D)$-approximation to the unrooted problem with general time windows, where $D$ is the ratio of the length of largest time window to the length of the smallest.  Polylogarithmic approximation algorithms to directed TSP with time windows have been given in \cite{Chekuri2} and \cite{Nagarajan2}.
TSP with time windows has also been studied in the operations research community,
as in \cite{Focacci1} and \cite{Focacci2},
where it is exhaustively solved to optimality.

The problem of orienteering is also significant because it is used as a subroutine in many deadline and time window problems.  In orienteering, the goal is to find a path visiting as many locations as possible, subject to some constraint on the total distance traveled (or time taken).  The first significant results in this area found constant approximations for several variations in the plane \cite{Arkin}.  A PTAS for orienteering in the plane was later given in \cite{Chen}.  Recent results in rooted and point-to-point versions of orienteering \cite{Bansal, Blum3, Chekuri2} have made the latest improvements in approximation algorithms for time window problems possible.

In this paper, we introduce a novel time-partitioning scheme that is especially well suited to unit-length time windows.
We partition requests
into subsets in such a way that we can play off
proximity of location against proximity of time
when forming the subsets.
Partitioning is also done in \cite{Bansal}, but that partitioning approach is different because it is designed to handle general time windows in a fashion not intended to get within a constant factor of optimal, even when the windows are unit-time.
Our approach partitions requests by their time windows,
so that the requests of any subset in the partition
are uniformly available over the entire extent of time
under consideration for that subset.
In our partitioning, we identify discrete \emph{periods} of equal length 
and \emph{trim} the time window for each request
to be the period that was wholly contained in it.  
Trimming induces at most a linear number of periods, each of which we can then consider separately.
Trimming loses the repairman at most a constant fraction of possible profit
and increases the necessary speed of the deliveryman by at most a constant factor.

For the variations restricted to trees, once we partition requests on the basis of common periods,
we are able, for requests with a common period,
to solve a variety of subproblems exactly for the repairman
and almost exactly for the deliveryman, in contrast to general graphs for which we use approximate rather than exact solutions.
For the repairman on a graph, we use constant approximation algorithms from \cite{Bansal} and \cite{Chekuri2} as subroutines.
For all of the problems we consider, we can combine solutions for each different period using dynamic programming.  Although dynamic programming is used for the deliveryman, the algorithm and especially the analysis differ from the repairman.  A key insight is that the effects of trimming can be offset by increasing speed and that the amount of speed needed can be analyzed by imagining the deliveryman running a backwards and forwards pattern along an optimal service tour.

To deal with windows with lengths between 1 and 2 (or between 1 and some constant $c$), we can generalize our repairman algorithms by using more than one trimming scheme.  Each trimming scheme employs a different period size that, when all such schemes are considered together, adapts to different distributions of window size.  By starting each trimming scheme at a number of carefully chosen times and keeping the most profitable run found, we show that the approximation factor for repairman on windows of different lengths can be bounded by a weighted average of the bounds of each trimming scheme.  For windows with length between 1 and 2, this bound yields a constant-factor approximation with a better bound than the result in \cite{Chekuri3} for the same problem.  A different accounting of trimming shows that the speeding deliveryman on windows with length between 1 and 2 can also be approximated to within a constant factor.  Other work \cite{Bansal} has focused on time windows with arbitrary lengths, but improved approximation guarantees for time windows with lengths in some bounded range may be useful for many practical applications in which time window lengths do not vary dramatically.

In Sect.\ \ref{section:trimming}, 
we characterize the effects of contracting the time windows of the service requests. 
In Sect.\ \ref{section:repairman-tree}, we give an approximation algorithm with a bound of $3$
for the repairman on a tree.
In Sect.\ \ref{section:repairman-graph}, we give an approximation algorithm with a bound of $6 + \epsilon$
for the repairman on a graph.
In Sect.\ \ref{section:deliveryman-tree}, we give an approximation algorithm for a deliveryman
on a tree with a maximum increase in speed by a factor of $4+\epsilon$.  We note that the standard notation for approximation ratios may cause confusion in this context where both maximization and minimization problems are being considered, because these ratios are always given as values greater than 1.
In Sect.\ \ref{section:deliveryman-graph}, we give an approximation algorithm for a deliveryman
on a graph with a maximum increase in speed of a factor of 8.
In Sect.\ \ref{section:np-hardness}, we sketch the NP-hardness of the problems on a tree.
In Sect.\ \ref{section:service-times}, we show ways in which non-zero service times for the repairman problem can easily be accommodated with small changes to our algorithms.
In Sect.\ \ref{section:different-repairman}, we extend our repairman algorithms to time windows whose lengths are all within a factor of two of each other and then show how this idea can be applied to time windows with lengths in any bounded range.  
In Sect.\ \ref{section:different-deliveryman}, we extend the analysis for our deliveryman algorithms to time windows whose lengths are all within a factor of two of each other.  Once again, this idea can be expanded to time windows with lengths in any bounded range.

A preliminary version of this paper appeared in \cite{Frederickson3}.

\section{Trimming Requests}
\label{section:trimming}

Trimming is a simple and yet powerful technique
that can be applied when we deal with unit-time windows.
Starting with time 0, we make divisions in time at values 
which are integer multiples of one half, i.e., 0, .5, 1, and so on.  
We assume that no request window starts on such a division,
because if it did, we could redefine times to be decreased by a negligible amount.
We thus assume that the starting time for any window is positive.
Let a \emph{period} 
be the time interval from one division up to but not including the next division.
Because every service request is
exactly one unit long in time, half of any request window will be wholly contained within only one period, with the rest
divided between the preceding and following periods. 
We then trim each service request window to coincide with the period wholly contained in it,
discarding those portions of the request window that fall outside of the chosen period.

For the repairman problem,
the trimming may well lower the profit of the best service run, but by no more than a constant factor.
Let the \emph{target interval} of a request be that part of the request 
window that coincides with the period to which the request is trimmed.  
Call that part of the request window contained in the previous period
its \emph{late interval}, 
and call that part of the request window contained in the following period
its \emph{early interval}.
Let $\pi(R)$ denote the profit of a service run $R$.

\begin{theorem}\label{theorem:trimming}
{\bf (Limited Loss Theorem)}
Consider any instance of the repairman problem.
Let $R^*$ be an optimal service run with respect to untrimmed requests.  
There exists a service 
run $R$ with respect to trimmed requests such that $\pi(R) \geq \frac{1}{3} \ \pi(R^*)$.  
\end{theorem}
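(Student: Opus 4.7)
The plan is to partition the service events of $R^*$ according to which piece of the untrimmed window each event occupies, and then to exhibit a trimmed-feasible service run for each piece. Each unit-length window whose start is not a multiple of $1/2$ decomposes into three disjoint parts: a late interval of length strictly less than $1/2$ lying in the preceding period, a target interval of length exactly $1/2$ (the trimmed window itself), and an early interval of length strictly less than $1/2$ lying in the following period. Let $R^*_L$, $R^*_T$, and $R^*_E$ denote the sub-collections of requests that $R^*$ services during their late, target, and early intervals respectively, so that $\pi(R^*) = \pi(R^*_L) + \pi(R^*_T) + \pi(R^*_E)$. By pigeonhole at least one of these has profit at least $\pi(R^*)/3$, so it will suffice to show that each of the three can be converted into a trimmed-feasible run of equal profit; the maximum such run will be the desired $R$.

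For $R^*_T$, take the subsequence of $R^*$ visiting exactly the corresponding locations at their original service times; these times already lie in the target intervals, and feasibility is inherited from $R^*$ because skipping intermediate stops only loosens inter-event travel constraints by the triangle inequality in the shortest-path metric. For $R^*_L$, take that same subsequence but delay every service time by exactly $1/2$. If $[T, T+1/2)$ is the target period of a given request, then its late interval is $[t,T)$ with $T-1/2 < t < T$, so any service time $s$ in the late interval satisfies $s+1/2 \in [T, T+1/2)$, as required. The uniform shift preserves all inter-event time differences, so feasibility is maintained. The case $R^*_E$ is symmetric, using a $1/2$ advance instead of a delay, and the same interval arithmetic puts every shifted service time into the target interval.

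The only delicate point is verifying that a shift of magnitude exactly $1/2$ is sufficient to carry any late (resp.\ early) service time into its target interval. This relies on the period length being exactly $1/2$ together with the assumption that no window endpoint coincides with a division, which forces both the late and the early interval to have length strictly less than $1/2$. I do not anticipate any further obstacles; once these interval inclusions are checked, averaging over the three-way partition delivers the bound $\pi(R) \geq \frac{1}{3}\,\pi(R^*)$.
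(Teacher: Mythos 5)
Your proposal is correct and follows essentially the same best-of-three argument as the paper: partition the service events of $R^*$ by late, target, and early intervals, apply pigeonhole, and handle the late and early cases by shifting the run $.5$ units later or earlier in time. Your explicit interval arithmetic verifying that a shift of exactly $1/2$ lands each service time in its target interval is a detail the paper leaves implicit, but the underlying construction is identical.
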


\begin{proof}
We use an elegant best-of-three argument.  Observe that $R^*$ must have at least one third of its service events
in either the target intervals, the early intervals, or the late intervals.
If at least one third of the service events of $R^*$ occur in target intervals, 
then have $R$ follow the same path and schedule as $R^*$ but service only those
requests in target intervals.

If at least one third of the service events of $R^*$ occur in late intervals,
then take service run $R$ to be $R^*$ but started .5 units later in time,
and with $R$ servicing those requests that were in late intervals of $R^*$
but are now in target intervals of $R$.  
Then the number of service events of $R$ will be at least one third
of the number of service events for $R^*$.

Similarly, 
if at least one third of the service events of $R^*$ occur in early intervals,
take $R$ to be $R^*$ but started .5 units earlier in time,
with $R$ servicing those requests that were in early intervals of $R^*$
but are now in target intervals of $R$.  Recall that starting $R$ earlier in time is permissible in the unrooted problem.

In each case, 
there is a service run $R$ for trimmed requests 
that contains at least one third of the service events
of an optimal service run for untrimmed requests. 
Since one of these three cases must always hold, 
the desired $R$ always exists.
\end{proof}

For the deliveryman problem,
trimming may well increase the necessary speed of the best service tour, but by no more than a constant factor.  
Let $s(Q)$ denote the minimum speed needed for service tour $Q$ to visit all service requests.

\begin{theorem}
\label{theorem:deliverytrimming}
{\bf (Small Speedup Theorem)}
Consider any instance of the deliveryman problem.
Let $Q^*$ be an optimal service tour with respect to untrimmed requests starting at time $t = 0$.  
There exists a service tour $Q$ with respect to trimmed requests such that $s(Q) \leq 4 s(Q^*)$.  
\end{theorem}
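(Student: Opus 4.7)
The plan is to construct $Q$ as a time-warped traversal of $Q^*$'s trajectory, running back and forth along $Q^*$'s path at speed $4\,s(Q^*)$. Let $x^*(\tau)$ denote $Q^*$'s position at time $\tau$ and set $Q(t) = x^*(\sigma(t))$ for a continuous, piecewise-linear function $\sigma$ with $|\sigma'(t)| \leq 4$ wherever differentiable. The slope bound gives $s(Q) \leq 4\,s(Q^*)$ immediately, so the task reduces to designing $\sigma$ such that on every trimmed period $P_i = [i/2, (i+1)/2)$, the image $\sigma(P_i)$ contains the interval $J_i = ((i-1)/2, (i+2)/2)$. This $J_i$ is precisely the range of possible $Q^*$-visit times for a request whose trimmed window is $P_i$, since the untrimmed window lies in $P_{i-1} \cup P_i \cup P_{i+1}$. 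Once coverage holds, any request $r$ visited by $Q^*$ at $\tau_r \in J_i$ is serviced by $Q$ at the time $t \in P_i$ with $\sigma(t) = \tau_r$, since $Q(t) = x^*(\tau_r)$ is exactly the location of $r$.

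I would build $\sigma$ by alternating its shape between odd- and even-indexed periods. On odd $i$, $\sigma$ starts at $(i+2)/2$, descends with slope $-4$ to $(i-1)/2$ over $[i/2,\, i/2 + 3/8]$, then ascends with slope $+4$ to $i/2$ over $[i/2 + 3/8,\, (i+1)/2]$; on even $i$, $\sigma$ rises monotonically with slope $+4$ from $(i-1)/2$ at time $i/2$ to $(i+3)/2$ at time $(i+1)/2$. Continuity at each period boundary follows from matching endpoint values: odd-period ends and even-period starts both equal $((i+1)-1)/2$, and even-period ends and odd-period starts both equal $((i+1)+2)/2$. The image of $\sigma$ on an odd $P_i$ is $[(i-1)/2, (i+2)/2]$ and on an even $P_i$ is $[(i-1)/2, (i+3)/2]$, each a superset of $J_i$, so coverage is established.

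The main subtlety, and what pins down the constant $4$, is the following counting. With slope $4$ the total variation on a period of length $1/2$ is at most $2$, whereas the minimum total variation of a continuous function whose image has length $L$ and whose endpoints $s, s'$ lie in that image is $2L - |s - s'|$. Taking $L = 3/2$ to cover $J_i$ forces $|\sigma((i+1)/2) - \sigma(i/2)| \geq 1$, so no fixed-shape schedule with uniform per-period advance of $1/2$, which is what matching the drift of the $J_i$ would naively require, can succeed. The alternation resolves this: the odd periods advance $\sigma$ by $-1$ and the even periods by $+2$, so each individual period saturates the variation budget while the two-period average advance is exactly $+1/2$. The remaining boundary bookkeeping is minor, because the unrooted formulation lets $Q$ position itself at $x^*(3/2)$ by time $1/2$ (traveling distance $(3/2)\,s(Q^*)$ at speed $4\,s(Q^*)$ takes time $3/8 < 1/2$), and on the last nonempty period the definition of $\sigma$ can be truncated, extending $x^*$ as a constant past $T^*$ if necessary.
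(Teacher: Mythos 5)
Your proposal is correct and is essentially the paper's own construction: your $\sigma$ reproduces exactly the paper's speed-$4\,s(Q^*)$ racing pattern along $Q^*$ (forward for one period, backward for $3/4$ of a period, forward for $1/4$ of a period, with net advance matching $Q^*$ over every two periods), merely repackaged as the coverage condition $\sigma(P_i) \supseteq J_i$ in place of the paper's explicit case-by-case list of service times. The remaining differences---starting $Q$ by racing forward from $x^*(0)$ rather than extending $Q^*$ backward to time $-.5$, and the total-variation lower bound motivating the odd/even alternation---are cosmetic or purely explanatory and do not alter the argument.
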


\begin{proof}
We shall extend $Q^*$ backward for $t < 0$ by assuming that $Q^*$ proceeds from any convenient position so that it encounters the original starting position at time $t = 0$.  
Let \emph{racing} describe movement, either forward or backward, along $Q^*$ at a speed of $4 s(Q^*)$.
We define tour $Q$ which races along $Q^*$.  
During any two consecutive periods, the deliveryman 
will make a net advance equal to the advance of $Q^*$ over those two periods. 

Identify as $t_i$ the time $t = .5i$ which is also the starting time of period $i$. Let $f(t)$ be a function that gives the location of the deliveryman on $Q^*$ for any given time $t$.  We define $Q$ as follows.  Start tour $Q$ at $t = 0$ at the location that $Q^*$ has at time $t = -.5$.  From there, tour $Q$ follows a repeating pattern of racing forward along $Q^*$ for $1$ period, racing backward along $Q^*$ for $.75$ periods, and then racing forward along $Q^*$ for $.25$ periods.  We define $q(t)$ to describe the movement of $Q$ as follows.

\noindent For $t_i \leq t < t_i + 1$, where $i$ is even, define\\

\noindent $q(t) = \left\{ \begin{array}{lrclr}
f(t_i - .5 + 4(t - t_i))   & t_i &\leq~t~\leq & t_i + .5 & \mbox{(forward for 1 period)}\\
f(t_i + 3.5 - 4(t - t_i)) & t_i + .5 &\leq~t~\leq &t_i + .875 & \mbox{(backward for .75 periods)}\\
f(t_i -3.5 + 4(t - t_i))  & t_i + .875 &\leq~t~\leq &t_i + 1 & \mbox{(forward for .25 periods)}
\end{array} \right.$\\

\noindent Figure \ref{figure:s=4} gives an example of this pattern of movement for some $Q^*$ and a corresponding $Q$.

\begin{figure}[!hbt]
\centering
\begin{xy}
\xyimport(456, 148){\includegraphics[width=.75\textwidth]{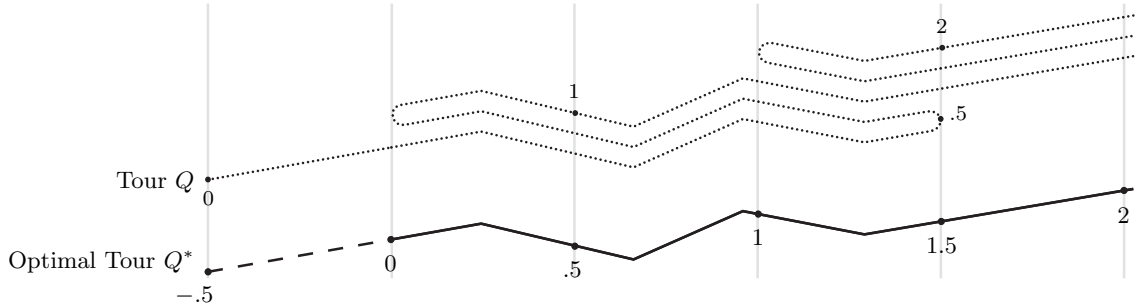}}
(-5,52)*!R\txt\footnotesize{Tour $Q$};
(-5,9)*!R\txt\footnotesize{Optimal Tour $Q^*$};
(-5,-9)*\txt\footnotesize{$-.5$};
(91,8)*\txt\footnotesize{$0$};
(180,5)*\txt\footnotesize{$.5$};
(271,22)*\txt\footnotesize{$1$};
(361,18)*\txt\footnotesize{$1.5$};
(450,34)*\txt\footnotesize{$2$};
(2,43)*\txt\scriptsize{$0$};
(181,101)*\txt\scriptsize{$1$};
(369,89)*\txt\scriptsize{$.5$};
(361,136)*\txt\scriptsize{$2$};
\end{xy}
\caption{Example of tour $Q$ at speedup of 4 compared with an optimal tour $Q^*$.}
\label{figure:s=4}
\end{figure}

Consider a request $r$ serviced at time $t$ in $Q^*$.  If $t_i \leq t < t_i + .5$, 
then the time window of the request will be trimmed to be one of three periods of length .5:
$[t_i - .5, t_i)$, $[t_i, t_i + .5)$, or $[t_i + .5, t_i + 1)$.  We consider cases when $i$ is odd or $i$ is even separately.

\begin{description}
\item Case 1: $i$ is odd

If the window containing $r$ is trimmed to be $[t_i - .5, t_i)$, then service the request $r$ at time $t_i + .25((t - t_i) - 1)$.  
If the window is trimmed to be $[t_i, t_i + .5)$, then service the request $r$ at time $t_i + .25((t_i - t) + 1)$.   
If the window is trimmed to be $[t_i + .5, t_i + 1)$, then service the request $r$ at time $t_i + .25((t - t_i) + 2)$.

\item Case 2: $i$ is even

If the window containing $r$ is trimmed to be $[t_i - .5, t_i)$, then service the request $r$ at time $t_i + .25((t_i - t) - 1.5)$.  
If the window is trimmed to be $[t_i, t_i + .5)$, then service the request $r$ at time $t_i + .25((t - t_i) + .5)$.   
If the window is trimmed to be $[t_i + .5, t_i + 1)$, then service the request $r$ at time $t_i + .25((t_i - t) + 3.5)$.
\end{description}
\end{proof}

\section{Repairman Problem for a Tree}
\label{section:repairman-tree}
Trimming is indeed a valuable technique because 
we can solve the repairman problem on a tree exactly 
in the case when windows are already trimmed.
We first give a dynamic programming algorithm for the repairman problem on a tree 
when all requests share the same time window.
To find a path from $s$ to $t$ of profit $p$,
we start with the direct path from $s$ to $t$ 
and then add on low-cost pieces of subtrees that branch off the direct path
as necessary to achieve profit $p$.
We do so by contracting the path into a single node $r$  
and using dynamic programming to sweep up from the leaves, 
finding the cheapest paths in the tree for each possible profit. 

\begin{table}[!hbt]
\begin{tabular}{l}
\smallskip \\
\toprule
\textbf{SWEEP-TREE(node \emph{u})} \\
\midrule
\tab For $p$ from 0 to $\pi(u)$, set $L_u[p]$ to be $0$.\\
\tab For each child $v$ of $u$,\\
\tab \tab Call SWEEP-TREE($v$), which will generate $L_v$.\\
\tab \tab Add $2d(u, v)$ to each entry in $L_v$ except $L_v[0]$.\\
\tab \tab Let $\max _u$ be the largest profit in $L_u$ and $\max _v$ the largest profit in $L_v$.\\
\tab \tab For $p$ from 0 to $\max _u +\max _v$, set $L[p]$ to be $\infty$.\\
\tab \tab For $a$ from 0 to $\max_u$ and $b$ from $0$ to $\max_v$,\\
\tab \tab \tab Set $L[a+b]$ to be $\min \{ L[a+b], L_u[a]+L_v[b]\}$.\\
\tab \tab Set $L_u$ to be $L$.\\
\bottomrule
\end{tabular}
\end{table}

Our recursive subroutine SWEEP-TREE$( r )$ produces a list $L_r$ of the lowest costs 
at which various profit levels can be achieved by including portions of the tree rooted at $r$.  
List $L_r$ is a mapping from profits to costs where $L_r[p]$ is the cost of achieving profit $p$, if recorded, and $\infty$ otherwise.
Let $\pi(u)$ be the profit gained by visiting $u$. 
Note that $\pi(u)$ counts the number of service requests at $u$. 
If we define $\pi(r)$ to be the profit of the direct path, 
then adding $d(s,t)$ to all the costs in the list $L_r$ yields 
the costs of the best paths on the full tree starting at $s$ and ending at $t$ for all possible profit levels.

\begin{lemma}
\label{lemma:treerepairsame}
For all possible profits, SWEEP-TREE identifies minimum-length paths from $s$ to $t$ in a total of $O(n^2)$ time.
\end{lemma}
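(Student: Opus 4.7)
The plan is to prove the lemma in two stages: first establishing correctness of SWEEP-TREE by induction on the subtree structure, then bounding its total running time by the standard tree-convolution argument.

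For correctness, I would argue by structural induction on the subtree $T_u$ rooted at $u$: after the call SWEEP-TREE$(u)$, the entry $L_u[p]$ equals the minimum total edge length of a closed walk at $u$, confined to $T_u$, that collects exactly profit $p$ (with $L_u[p] = \infty$ when no such walk exists). The base case is a leaf, where the only walk is the trivial one and profits up to $\pi(u)$ are available at cost $0$. For the inductive step, any closed walk at $u$ in $T_u$ decomposes uniquely into excursions into the children's subtrees; each excursion into $T_v$ must traverse the edge $(u,v)$ exactly once in each direction, contributing $2d(u,v)$ to its length, while the absence of an excursion contributes $0$. Adding $2d(u,v)$ to every nonzero entry of $L_v$ thus yields the true minimum cost of achieving each profit level by a single excursion into $T_v$, and the $(\min,+)$-convolution loop over $a$ and $b$ correctly combines the independent contributions of successive children. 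For the top-level contracted node $r$ representing the direct $s$-to-$t$ path, adding the fixed cost $d(s,t)$ converts each closed walk at $r$ into an $s$-to-$t$ path that detours into exactly the chosen subtree portions. Since all requests share a common time window, these are precisely the candidate repairman runs, and the minimum-cost walk per profit level is exactly what is sought.

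For the running time, the dominant cost is the double loop that convolves $L_u$ with $L_v$, taking $O(\max_u \cdot \max_v)$ time where $\max_u$ is the total profit already accumulated in the portion of $T_u$ processed so far and $\max_v$ is the total profit in $T_v$. I would charge this work to ordered pairs of service requests $(x,y)$ with $x$ in the already-processed portion of $T_u$ and $y$ in $T_v$; each such pair is charged exactly once across the whole execution, namely at the unique merge step where the children containing $x$ and $y$ are first joined, which occurs at their lowest common ancestor. Since the total profit is $O(n)$, the number of such ordered pairs is $O(n^2)$, which gives the bound. Initialization of the $L$ arrays and the edge-length additions contribute a further $O(n^2)$ that is subsumed in the same bound.

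The main technical obstacle is the running-time accounting: the naive sum $\sum \max_u \cdot \max_v$ can look larger than $O(n^2)$, so one must argue carefully that each ordered pair of profit units is charged at exactly one merge step, determined by the LCA of the two units. Correctness, by contrast, is essentially routine once the walk/excursion decomposition of closed walks in $T_u$ is made explicit.
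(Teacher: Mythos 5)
Your proposal is correct and follows essentially the same approach as the paper: correctness by structural induction on the tree (which the paper asserts in one line), and an $O(n^2)$ bound on the total $(\min,+)$-merge work, which the paper expresses as the recurrence $T(n) \leq cn_0 + \sum_{i=1}^k \bigl[\,T(n_i) + cn_i\bigl(1 + \sum_{j=0}^{i-1} n_j\bigr)\bigr]$ with $n$ counting nodes plus profit, solved by induction to $T(n) \leq dn^2$. Your charging of each ordered pair of profit units to the unique merge at its lowest common ancestor is precisely the combinatorial content of that recurrence's solution, so the two analyses are equivalent.
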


\begin{proof}Correctness follows from induction on the size of the tree.  Let $n$ represent the sum of the number
of nodes in the subtree rooted at node $u$ plus the total profit for the nodes in that subtree.
Let $n_0$ be the portion of $n$ that is attributable to node $u$,
and let $n_i$ be the portion of $n$ that is attributable to the subtree rooted at the $i^\mathrm{th}$ child of $u$, for $i=1, 2, \ldots , k$.
Then the time for SWEEP-TREE is described by
$$T(n) \leq cn_0 + \sum_{i=1}^k~\left[~T(n_i) +cn_i\left( 1+ \sum_{j=0}^{i-1}n_j \right)~\right]$$
The first term in this inequality accounts for the time it takes to initialize list $L_u$.  The outer summation accounts for the time spent on each child of $u$: first the recursion on the child subproblem, then the time to update each entry in $L_v$, and finally, with the inner summation, the time to update list $L$.
An induction proof establishes that $T(n) \leq dn^2$ for a suitable constant $d$.
\end{proof}

Although SWEEP-TREE might be viewed as being reminiscent of Sect.\ 2.6.3 in \cite{Chawla}, we note that the running time claimed there is not fully polynomial and provide SWEEP-TREE for completeness.

Using algorithm SWEEP-TREE, 
we next give the algorithm REPAIRMAN-TREE for multiple trimmed windows.
This algorithm uses dynamic programming to move from period to period, in increasing order by 
time.  As it progresses, it finds service runs of all possible profits from every trimmed request in the current 
period through some subset of trimmed requests in the current period and arriving at any possible trimmed request in a later period.  
In this way, for every profit value, we identify the earliest that we can arrive at a request that achieves that profit value.  The 
critical insight is that we may have to leave a certain period rather early in order to reach later requests in time.  By 
recording even such low profit service runs and considering them as starting points, we never rule out a service run that 
appears to be unpromising in early stages but arrives early enough to visit a large number of requests in later stages.
      
We focus on those periods that contain at least one trimmed request and number them 
from $S_1$, the period starting at the smallest time value,
up to the last period $S_m$.  
Let $n$ be the total number of requests.  
For every period $S_i$, we arbitrarily number its trimmed requests as $s_{ij}$.
Let $R_{ij}^k$ be the earliest arriving $k$-profit sequence of service events ending at 
$s_{ij}$.  Let $A_{ij}^k$ be the arrival time of $R_{ij}^k$ at $s_{ij}$.  For each $s_{ij}$, we initialize every $R_{ij}^1$ to be 
$\{s_{ij}\}$ and every $A_{ij}^1$ to be 0.  For $k > 1$, 
let $R_{ij}^k$ be initialized to $null$, 
and let every other $A_{ij}^k$ be initialized to {\it begin}$(S_{i+1})$, 
where {\it begin}$(S_i)$ is the first time instant in period $S_i$.

We use SWEEP-TREE to find a path of shortest length 
from a given starting request to a given ending request,
subject to accumulating a specified profit.
Let $time(R)$ be the amount of time a path $R$ takes.  
For each indexed period $S_i$, from 1 up to $m$, we process period $S_i$ as described in PROCESS-PERIOD.

After all the periods have been processed, we identify the largest-profit path found,
and return that resulting service run $R$ as the output of algorithm REPAIRMAN-TREE.

\begin{theorem}
\label{theorem:repairtree}
In $O(n^4)$ time algorithm REPAIRMAN-TREE finds a service run on a tree that has at least $\frac{1}{3}$ the profit of an optimal service run.
\end{theorem}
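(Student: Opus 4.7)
The proof naturally splits into three pieces: (i) the $\frac{1}{3}$ approximation factor, (ii) correctness of the dynamic program on trimmed instances, and (iii) the $O(n^4)$ running time bound. For (i), I would invoke Theorem \ref{theorem:trimming} directly: it suffices to show that REPAIRMAN-TREE, viewed as an algorithm on trimmed requests, returns a service run whose profit is at least that of an optimal trimmed service run. Combined with the Limited Loss Theorem, this will give the $\frac{1}{3}$ bound against untrimmed optima.

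For correctness (ii), the plan is to argue that the dynamic program explores a decomposition of any trimmed-optimal run $R^\dagger$ period-by-period. Fix such an $R^\dagger$ and let $s_{i j_i^{\mathrm{in}}}$ and $s_{i j_i^{\mathrm{out}}}$ be the first and last trimmed request serviced in period $S_i$, with profit $p_i$ collected inside $S_i$. Because trimming leaves every request in $S_i$ available throughout the whole period, the sub-path of $R^\dagger$ inside $S_i$ is a valid $s_{ij_i^{\mathrm{in}}}$-to-$s_{ij_i^{\mathrm{out}}}$ path of profit at least $p_i$ in the tree; so SWEEP-TREE, called from each candidate start in $S_i$, returns a path of length no greater than that used by $R^\dagger$ for each (end, profit) pair. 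The essential monotonicity claim I then need is: if $A_{ij}^k$ is the earliest time one can arrive at $s_{ij}$ while having already collected profit $k$, then any continuation of $R^\dagger$ from that state can be mimicked when starting earlier. This justifies keeping only one ``state'' per (endpoint, profit) pair during the sweep over periods $S_1, \ldots, S_m$. By induction on $i$, the DP values $(R_{ij}^k, A_{ij}^k)$ dominate the corresponding prefixes of $R^\dagger$, and the final maximum-profit run returned is at least as profitable as $R^\dagger$.

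For the running time (iii), I would account as follows. There are at most $m \leq 2n$ periods containing requests, and at most $n$ trimmed requests total, so at most $n$ distinct profit values $k$ need to be tracked. Each invocation of SWEEP-TREE costs $O(n^2)$ by Lemma \ref{lemma:treerepairsame}. Within PROCESS-PERIOD for period $S_i$, SWEEP-TREE is called once per (start request in $S_i$) pair, summing to $O(n \cdot n^2) = O(n^3)$ total SWEEP-TREE work over the whole execution. The dominant term is the propagation step: for each of the $O(n^2)$ (intra-period start, intra-period end) combinations, for each of the $O(n)$ profit levels, we must update the arrival time at each later request $s_{i'j'}$, giving $O(n^2 \cdot n \cdot n) = O(n^4)$ update operations in the worst case.

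The main obstacle I foresee is the monotonicity/exchange argument in step (ii): concretely, showing that restricting attention to the earliest arrival $A_{ij}^k$ at each (endpoint, profit) pair loses nothing. Everything else is bookkeeping. One has to verify that an earlier arrival at $s_{ij}$ within the same period does not violate any time-window constraints downstream (it cannot, since trimmed requests in $S_i$ remain valid throughout $S_i$, and later periods are only helped by an earlier arrival) and then apply induction on $i$ to splice the DP-recovered prefix together with the suffix of $R^\dagger$ from $s_{i j_i^{\mathrm{out}}}$ onward without loss of profit.
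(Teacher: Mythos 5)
Your proposal is correct and follows essentially the same route as the paper: the Limited Loss Theorem supplies the factor $\frac{1}{3}$, the dynamic program is argued optimal on trimmed instances via an earliest-arrival dominance induction (which the paper's proof merely asserts, and which your exchange argument fills in, matching the paper's stated ``critical insight'' that earlier arrival at a given (endpoint, profit) state never hurts since trimmed requests stay valid throughout their period), and the total time is $O(n^4)$. One bookkeeping slip: since SWEEP-TREE contracts the direct $s$--$t$ path into the node $r$, it must be invoked once per (start, end) pair --- giving $O(n^2)$ calls at $O(n^2)$ each, i.e., $O(n^4)$ total SWEEP-TREE work as the paper counts --- rather than once per start as your $O(n^3)$ subtotal assumes, though this does not affect your final $O(n^4)$ bound, which your propagation term already dominates.
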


\begin{proof}
Correctness follows because the dynamic programming structure of PROCESS-PERIODS finds a run of optimal profit on trimmed time windows.  By the Limited Loss Theorem, trimming time windows reduces the profit found to at worst ${1 \over 3}$ of optimal.  Note that SWEEP-TREE need be run only once per node per period.  Thus, REPAIRMAN-TREE takes $O(n^4)$ time, making $O(n^2)$ calls to SWEEP-TREE,
each of which takes $O(n^2)$ time.\end{proof}

\begin{table}[!hbt]
\begin{tabular}{l}
\smallskip\\
\toprule
\textbf{PROCESS-PERIOD(period $S_i$)} \\
\midrule
\tab For each trimmed request $s_{ij}$ in period $S_i$,\\ 
\tab \tab For each possible profit value $p$,\\
\tab \tab \tab For each subsequent period $S_a$ that contains a trimmed request,\\
\tab \tab \tab \tab For each trimmed request $s_{ab}$ in $S_a$, do the following:\\	
\tab \tab \tab \tab \tab Let $R$ be the path corresponding to $L_r(p)$ that results from \\
\tab \tab \tab \tab \tab \tab SWEEP-TREE($r$) with $s=s_{ij}$ and $t= s_{ab}$, on the set $S_i - \{s_{ij}\}$.\\
\tab \tab \tab \tab \tab Let $R^-$ be $R$ with its last leg, ending at $s_{ab}$, removed.\\
\tab \tab \tab \tab \tab For $k$ from $1$ to $n - \pi(R)$,\\ 
\tab \tab \tab \tab \tab \tab If $A_{ij}^k + time(R^-) < $ {\it begin}$(S_{i+1})$, then\\	
\tab \tab \tab \tab \tab \tab \tab Let profit $q$ be $k + \pi(R) - 1$.\\
\tab \tab \tab \tab \tab \tab \tab If $A_{ij}^k + time(R) < A_{ab}^q$, then\\
\tab \tab \tab \tab \tab \tab \tab \tab Set $R_{ab}^q$ to be $R_{ij}^k$ followed by $R$.\\
\tab \tab \tab \tab \tab \tab \tab \tab Set $A_{ab}^q$ to be $\max \{A_{ij}^k + time(R)$, {\it begin}$(S_a)\}$.\\
\bottomrule
\end{tabular}
\end{table}

\section{Repairman Problem for a Graph}
\label{section:repairman-graph}

In this section we describe our approximation algorithm for the repairman on a graph.  We incorporate improvements from \cite{Chekuri2} into \cite{Bansal}, using profit values rather than distance.  We will also refer to algorithms in \cite{Blum3} that are used as subroutines in \cite{Bansal}.  And, once again, we must embed these techniques into our dynamic programming approach.

Our approximation algorithm for the repairman on a graph
uses approximation algorithms for the following optimization problem.

\begin{description}
\item[Source-Sink $k$-Path ($k$-SSP):]
Given nodes $s$ and $t$ and integer $k$,
find a path of smallest cost from $s$ to $t$ that contains at least $k$ nodes.  (This problem is called min-cost $s$-$t$ path in 
\cite{Blum3}.)
\end{description}  

\noindent 
We consider two approximation problems for $k$-SSP.  Let $c(P)$ be the cost of path $P$ on metric $d$.
Following \cite{Blum3}, let the \emph{excess} of a path $P$ from $s$ to $t$
be $\varepsilon(P) = c(P) - d(s,t)$.  Note that we use $\varepsilon$ to refer to excess while using the visually similar $\epsilon$ to refer to small constants greater than 0.

\begin{description}

\item[Small-Excess $k$-SSP:]
Given nodes $s$ and $t$ and integer $k$,
find a path of small excess from $s$ to $t$ containing at least $k$ nodes.  
(This problem is called min-excess path in \cite{Blum3}.)  

\item[Reduced-Profit $k$-SSP:]
Given nodes $s$ and $t$ and integers $k$ and $\beta > 1$,
find a path from $s$ to $t$ containing at least $k/\beta$ nodes
and costing no more than an optimal $k$-SSP.
\end{description}

The final performance bound of our algorithm for a graph depends on the approximation ratio of the Reduced-Profit $k$-SSP algorithm, which depends primarily on a bicriteria algorithm for Small-Excess $k$-SSP given in \cite{Chekuri2}. 
We describe a technique similar to the one used in \cite{Bansal} to solve the Reduced-Profit 
$k$-SSP problem using this bicriteria approximation to Small-Excess $k$-SSP.  
Finally, we describe how to approximate the repairman problem by
nesting the approximation for Reduced-Profit 
$k$-SSP within our dynamic programming structure.

To follow the approach outlined in the preceding paragraph, we first describe an algorithm we will use as a subroutine.  Let BI-EXCESS be the bicriteria $( {1\over 1-\epsilon}, 2)$-approximation algorithm to Small-Excess $k$-SSP given in \cite{Chekuri2}.  Given a starting node $s$, an ending node $t$, and a profit level $k$, BI-EXCESS returns a path from $s$ to $t$ with profit at least $k(1 - \epsilon)$ whose excess is no more than twice that of an optimal path from $s$ to $t$ collecting profit $k$.

We next describe approximation algorithm REDUCED-PATH
for the Reduced-Profit $k$-SSP problem
which is similar to the one used in \cite{Blum3} for orienteering except that we supply a profit value instead of a distance bound as a parameter and expand the analysis to include bicriteria approximations for Small-Excess $k$-SSP.  To find a path $B$, we 
identify many possible subpaths $B_j$ by choosing all possible pairs of nodes $u$, $v$ and running BI-EXCESS between them with 
profit parameter $k/\beta$.  Of all these possible pairs, we keep the one for which $c(B_j) + d(s, u) + d(v, t)$ is smallest.  
For that pair, we form path $B$ by concatenating edge $(s, u)$, path $B_j$, and edge $(v, t)$.  

\begin{lemma}
\label{lemma:repair profit factor}
Path $B$ from REDUCED-PATH gives a $(2 + \epsilon)$-approximation for Reduced-Profit $k$-SSP.  That is, $\pi(B) \geq {1 \over 2 + \epsilon}\pi(P^*)$ and $c(B) \leq c(P^*)$, where $P^*$ is an optimal solution to $k$-SSP.
\end{lemma}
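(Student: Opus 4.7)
The plan is to analyze REDUCED-PATH by exhibiting a specific pair of nodes $(u^*, v^*)$ on an optimal $k$-SSP path $P^*$ such that running BI-EXCESS between $u^*$ and $v^*$ (and prepending/appending the shortest $s$-$u^*$ and $v^*$-$t$ edges) yields a path satisfying both the claimed profit and cost guarantees. Since REDUCED-PATH's exhaustive enumeration retains the pair minimizing $c(B_j) + d(s,u) + d(v,t)$, the final $B$ is no worse than the concatenation for this witness pair, so the lemma reduces to producing a single good witness.

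To construct $(u^*, v^*)$, I would partition the traversal of $P^*$ into $\beta$ consecutive subpaths $P_1, \ldots, P_\beta$, each carrying profit $k/\beta$, with boundary nodes $u_i, v_i$. Two structural facts drive everything. First, telescoping along $P^*$ together with the triangle inequality gives $\sum_i \varepsilon(P_i) \leq \varepsilon(P^*)$, so some index $i^*$ satisfies $\varepsilon(P_{i^*}) \leq \varepsilon(P^*)/\beta$. Second, the prefix/suffix of $P^*$ bounds the direct distances: $d(s, u_{i^*}) + d(v_{i^*}, t) \leq c(P^*) - c(P_{i^*})$. When REDUCED-PATH invokes BI-EXCESS at pair $(u_{i^*}, v_{i^*})$ with profit target $k/\beta$, the subpath $P_{i^*}$ itself is feasible with excess $\varepsilon(P_{i^*})$, so the returned $B_{j^*}$ has profit at least $(k/\beta)(1 - \epsilon)$ and excess at most $2\varepsilon(P_{i^*})$. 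Substituting $c(B_{j^*}) \leq d(u_{i^*}, v_{i^*}) + 2\varepsilon(P_{i^*})$ into $c(B) = d(s, u_{i^*}) + c(B_{j^*}) + d(v_{i^*}, t)$ and using the prefix/suffix bound gives $c(B) \leq c(P^*) + \varepsilon(P_{i^*})$. Choosing the internal $\beta$ slightly smaller than $2+\epsilon$ (absorbing a factor $1-\epsilon$ profit loss from BI-EXCESS into the $\epsilon$ on the profit side) yields $\pi(B) \geq k/(2+\epsilon)$.

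The hard part is driving the cost overhead $\varepsilon(P_{i^*})$ down to zero so that the bound reads $c(B) \leq c(P^*)$ rather than $c(P^*)(1 + 1/\beta)$. The trick is that the bound $d(s, u_{i^*}) + d(v_{i^*}, t) \leq c(P^*) - c(P_{i^*})$ is typically slack: those direct distances are strictly less than the corresponding subpath costs along $P^*$ by the excesses $\varepsilon(P^*_{s u_{i^*}})$ and $\varepsilon(P^*_{v_{i^*} t})$ of the outer pieces. A more careful accounting shows $c(B) \leq c(P^*) + \varepsilon(P_{i^*}) - \varepsilon(P^*_{s u_{i^*}}) - \varepsilon(P^*_{v_{i^*} t})$, and since the three excesses sum to at most $\varepsilon(P^*)$, the right witness pair exists provided we can choose it so that the middle excess does not exceed the outer excesses. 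Establishing this by an expanding-window argument (the middle's excess starts at zero, the outer's starts at $\varepsilon(P^*)$, and they cross at a point where the middle profit is at least $k/\beta$) is the main technical hurdle and the step I expect to require the most care when formalizing.
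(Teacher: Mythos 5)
Your route is, in substance, the argument the paper relies on: the paper's own proof is essentially a citation, computing $\epsilon'$ with $2+\epsilon = 2/(1-\epsilon')$ and invoking a generalization of Theorem 1 of \cite{Bansal} together with the bicriteria guarantee of BI-EXCESS from \cite{Chekuri2}, and what you are doing is reconstructing that cited segment-decomposition argument from scratch. Your setup is correct as far as it goes: the equal-profit partition of $P^*$, the bound $\sum_i \varepsilon(P_i) \leq \varepsilon(P^*)$, the observation that the exhaustive enumeration in REDUCED-PATH reduces everything to one good witness pair, and the refined accounting $c(B) \leq c(P^*) + \varepsilon(P_{i^*}) - \varepsilon(P^*_{s u_{i^*}}) - \varepsilon(P^*_{v_{i^*} t})$ are all right (note the prefix identity is in fact an equality, $d(s,u_{i^*}) = c(P^*_{s u_{i^*}}) - \varepsilon(P^*_{s u_{i^*}})$, by the definition of excess).

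The genuine gap is the step you flag at the end, and your proposed ``expanding-window'' fix does not work as stated: as the middle window grows, its excess is weakly increasing and the outer excess weakly decreasing, so the two quantities do cross, but nothing forces the crossing to occur at or after the point where the middle profit reaches $k/\beta$ --- for a fixed anchoring (say, a path beginning with a long, nearly profitless detour) the middle can absorb almost all of $\varepsilon(P^*)$ before collecting profit $k/\beta$. What rescues the argument is positional freedom, and it is already sitting in your own setup: excess is superadditive under concatenation (by the triangle inequality, $\varepsilon(P^*_{s u_{i^*}}) \geq \sum_{j < i^*} \varepsilon(P_j)$ and $\varepsilon(P^*_{v_{i^*} t}) \geq \sum_{j > i^*} \varepsilon(P_j)$), so choosing $i^*$ to be the \emph{minimum-excess} segment among $\beta \geq 2$ equal-profit segments gives $\varepsilon(P_{i^*}) \leq \sum_{j \neq i^*} \varepsilon(P_j) \leq \varepsilon(P^*_{s u_{i^*}}) + \varepsilon(P^*_{v_{i^*} t})$, which is exactly the inequality you need; with the factor-$2$ excess guarantee of BI-EXCESS this closes the cost bound $c(B) \leq c(P^*)$ in two lines, no window argument required. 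One further wrinkle: a non-integer $\beta$ ``slightly smaller than $2+\epsilon$'' makes the partition into $\beta$ equal-profit segments awkward; the clean choice, which is what the paper does, is to fix $\beta = 2$ (take the smaller-excess half) and absorb the entire $\epsilon$ into the BI-EXCESS profit loss via $\pi(B) \geq (1-\epsilon') k/2 = k/(2+\epsilon)$.
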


\begin{proof}
If we wish to find a $(2 + \epsilon)$-approximation for Reduced-Profit $k$-SSP, we compute $\epsilon'$ such that $2 + \epsilon = {2 \over 1 - \epsilon'}$.  Using a generalization of Theorem 1 from \cite{Bansal} with the bicriteria adaptations from \cite{Chekuri2}, a $({1 \over 1 - \epsilon'}, 2)$-approximation for BI-EXCESS allows REDUCED-PATH to produce path $B$ such that $\pi(B) \geq (1 - \epsilon')\pi(P^*_j) \geq {1 -\epsilon'\over 2}\pi(P^*) = {1 \over 2 + \epsilon}\pi(P^*)$.
\end{proof}

  The running time for BI-EXCESS is $\Lambda(n, \epsilon) = O(n^{O(1/\epsilon^2)})$ in \cite{Chekuri2}.  Since our algorithm runs BI-EXCESS for all possible pairs $u$ and $v$ inside a given period, for all possible profit values $k \leq n$, REDUCED-PATH runs in $O(n^3 \Lambda( n, \epsilon ))$.

Finally, we approximate the repairman problem as a whole.  Our approximation algorithm for the repairman on a graph
incorporates the preceding approximation algorithms within the context of
a dynamic programming algorithm with the same overall structure
as the algorithm for a tree.
For each indexed period $S_i$, from 1 up to $m$, we process period $S_i$ as in PROCESS-PERIOD.
The only difference is that instead of
taking $R$ to be the path corresponding to $L_r(p)$ that results from SWEEP-TREE($r$),
it takes $R$ to be the output of REDUCED-PATH($s_{ij}, s_{ab}, p)$ on the set $S_i - \{s_{ij}\}$, 
where $s_{ij}$ is the starting request in the path, $s_{ab}$ 
is the ending request in the path, and $p$ is the profit of which the path must have a constant fraction. 
As before, we identify the largest-profit path found and return the resulting service run $R$ as the output of algorithm REPAIRMAN-GRAPH.

\begin{theorem}
\label{theorem:repairgraph}
In  $O(n^4 \Lambda(n, \epsilon))$ time, REPAIRMAN-GRAPH finds a service run that has at least $\frac{1}{6 + \epsilon}$ the profit of an optimal service run.
\end{theorem}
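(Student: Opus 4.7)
The plan is to closely parallel the proof of Theorem \ref{theorem:repairtree}, substituting REDUCED-PATH for SWEEP-TREE as the per-period subroutine. Two things need to be established: (i) the dynamic programming in PROCESS-PERIOD, now fed by an approximate rather than exact per-period subroutine, still finds a service run whose profit approximates, within a factor of $2+\epsilon'$, the optimal run on trimmed requests; and (ii) combining this with the Limited Loss Theorem (Theorem \ref{theorem:trimming}) yields a $3(2+\epsilon')$ approximation against the untrimmed optimal, which I absorb into the form $6+\epsilon$ by choosing $\epsilon' = \epsilon/3$.

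For the approximation ratio, let $R^*$ be an optimal service run on untrimmed requests. By Theorem \ref{theorem:trimming} there is a trimmed-instance service run $R^{**}$ with $\pi(R^{**}) \geq \frac{1}{3}\pi(R^*)$. I would decompose $R^{**}$ period by period: within period $S_i$ it enters at some trimmed request $s_i^{\mathrm{in}}$, traverses intermediate requests yielding profit $p_i$, exits at some $s_i^{\mathrm{out}}$, and has within-period cost at most some $c_i$. Now consider the iteration of PROCESS-PERIOD that examines the triple $(s_i^{\mathrm{in}}, s_i^{\mathrm{out}}, p_i)$. By Lemma \ref{lemma:repair profit factor}, REDUCED-PATH returns a path $R$ whose profit is at least $p_i/(2+\epsilon')$ and whose cost is at most the optimal $k$-SSP cost, which in turn is at most $c_i$. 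Because the REDUCED-PATH piece costs no more than $R^{**}$'s piece, substituting it preserves all timing constraints; chaining these pieces yields a valid trimmed-instance run that the DP will match or improve. Summing over periods, the resulting profit is at least $\sum_i p_i/(2+\epsilon') = \pi(R^{**})/(2+\epsilon')$, and the trimming loss then delivers the claimed $1/(6+\epsilon)$ bound.

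For the running time, I would observe that the DP skeleton is identical to REPAIRMAN-TREE, so only the per-period cost changes: each call to SWEEP-TREE (contributing $O(n^2)$ per entry-exit pair) is replaced by REDUCED-PATH (contributing $O(n^3 \Lambda(n,\epsilon))$ across all profits for a given entry-exit pair). Substituting this into the tree running-time analysis yields the claimed $O(n^4 \Lambda(n,\epsilon))$ total.

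The main obstacle I anticipate is verifying that the approximate per-period paths can actually be stitched into a valid service run respecting all timing constraints. The crucial point is that Lemma \ref{lemma:repair profit factor} bounds the REDUCED-PATH cost against the \emph{optimal} $k$-SSP cost, not against its own (reduced) profit, so the per-period substitution cannot increase time usage and the chain remains feasible. A minor bookkeeping subtlety is that the DP must record the actual achieved profit of $R$ rather than the target $p$; this is already handled by PROCESS-PERIOD, which sets $q = k + \pi(R) - 1$ using the actual $\pi(R)$.
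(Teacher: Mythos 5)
Your overall route is the paper's route: substitute REDUCED-PATH for SWEEP-TREE inside PROCESS-PERIOD, compose the factor $2+\epsilon'$ from Lemma~\ref{lemma:repair profit factor} with the factor $3$ from the Limited Loss Theorem, and rescale $\epsilon'=\epsilon/3$. Your correctness argument is in fact more explicit than the paper's, which compresses it into one sentence (``correctness also follows by the same argument as for Theorem~\ref{theorem:repairtree}''): your period-by-period decomposition of the trimmed-optimal run, together with the observation that Lemma~\ref{lemma:repair profit factor} bounds the cost of the returned path by the \emph{optimal} $k$-SSP cost (so that each substituted piece is no slower than the piece of $R^{**}$ it replaces and the chained run stays feasible), is exactly the missing glue, and your remark about the DP recording the achieved profit $q = k + \pi(R) - 1$ is the right bookkeeping point.

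The one genuine gap is in your running-time accounting. The tree analysis charges $O(n^2)$ calls to SWEEP-TREE (once per node per period), each costing $O(n^2)$. If, as you state, each entry-exit pair now incurs a REDUCED-PATH cost of $O(n^3\Lambda(n,\epsilon))$, then direct substitution into that analysis gives $O(n^2) \cdot O(n^3\Lambda(n,\epsilon)) = O(n^5\Lambda(n,\epsilon))$, not the claimed $O(n^4\Lambda(n,\epsilon))$. The paper's count is different: the expensive part of REDUCED-PATH is the collection of BI-EXCESS runs over all pairs $(u,v)$ inside the period and all profit values $k \leq n$, and these runs depend only on the period $S_i$, not on the endpoints $s_{ij}$ and $s_{ab}$. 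Hence REDUCED-PATH need be executed only once per period --- $O(n)$ executions, each $O(n^3\Lambda(n,\epsilon))$ --- with all entry-exit-profit triples within a period sharing the same precomputed BI-EXCESS paths; the remaining per-triple work of minimizing $d(s,u) + c(B_j) + d(v,t)$ is polynomial and dominated by the $\Lambda(n,\epsilon)$ term. Your bound is correct, but to justify it you must add this amortization-across-pairs observation rather than substituting per-call costs into the tree analysis.
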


\begin{proof}
By using the algorithm PROCESS-PERIOD but substituting in the REDUCED-PATH algorithm for the SWEEP-TREE algorithm, REPAIRMAN-GRAPH will find a most profitable service run that is made up of these approximately optimal $k$-SSP paths.  Correctness also follows by the same argument as for Theorem \ref{theorem:repairtree} with the substitution of approximately optimal paths found inside each period.  After taking into account the factor of $2 + \epsilon$ for the $k$-SSP approximation and the factor of $3$ for trimming, the service run $R$ returned by REPAIRMAN-GRAPH has a profit $\pi(R)$ such that $\pi(R) \geq \frac{1}{6 + \epsilon} \pi(R^*)$.

REPAIRMAN-GRAPH will run REDUCED-PATH once for each period to find all the approximately optimal paths.  Since the total number of periods with time windows trimmed into them is no greater than the number of requests, making $O(n)$ calls to REDUCED-PATH will take $O(n^4 \Lambda(n, \epsilon))$ time.
\end{proof}

\section{Deliveryman for a Tree}
\label{section:deliveryman-tree}

With the Small Speedup Theorem at our disposal,
the speeding deliveryman algorithm on a tree
is almost as cleanly conceived as the repairman on a tree.  
When all requests share the same time window, we can find an optimal solution as follows.  For every possible starting request $u$ and ending request $v$ in the period, we identify the direct path between $u$ and $v$.  Remove any leaf and its adjacent edge if the leaf is not $u$ or $v$ or the location of a request, and repeat until every leaf is either $u$ or $v$ or is the location of a request.  We then double up every edge in the slimmed down tree that is not the direct path from $u$ to $v$, and then identify the Euler path from $u$ to $v$.  Since we test all pairs of starting and ending requests, we clearly find the shortest length path and therefore the minimum necessary speed to visit all requests in the tree during a single period.  We can find the direct distances between all pairs in the tree and thus the length of the shortest of the Euler paths in $O(n^2)$ time.

To approximate a solution to the problem on a tree over multiple periods, 
we develop an algorithm to test if a specific speed is fast enough to visit all requests during their periods.  
We use the idea behind the single-period solution in conjunction with dynamic programming.  
TEST-SPEED processes every period in order and finds the earliest-arriving paths starting at request $u$, 
ending at request $v$, and visiting all requests in the period for every pair of requests $u$ and $v$.  
It then glues each of these paths to the earliest-arriving paths which visit all requests in previous periods and, 
for every request $v$ in $S_i$, keeps the earliest-arriving complete path ending at $v$.  
Let $\epsilon$ be an input parameter and $\epsilon' = \epsilon/4$.
Define the algorithm DELIVERY-TREE, which binary searches $\log {1 \over \epsilon'}$ times on a range of speeds using TEST-SPEED, and then, from among the paths found by TEST-SPEED that visit all requests, returns the path that uses the slowest speed.  

For $u \in S_i$, let the arrival time $A_u$ be the earliest time at which a path visiting all the requests in periods before $S_i$ arrives at request $u$ before visiting any other request in $S_i$.  For $v \in S_i$, let the departure time $D_v$ be the earliest time at which a path visiting all requests in the periods up to and including $S_i$ ends at request $v$.  For $u, v \in S_i$, let $length_{uv}$ be the length of the shortest path from $u$ to $v$ which visits all requests in $S_i$.\\

\begin{table}[!hbt]
\begin{tabular}{l}
\toprule
\textbf{TEST-SPEED( \emph{speed} )} \\
\midrule
\tab For each request $u$ in $S_1$, set $A_u$ to be $0$.\\
\tab For $i$ from 1 to $m$ \\
\tab \tab For each request $v$ in $S_i$,\\
\tab \tab \tab Set $D_v$ to be $\min_{u \in S_i} \{A_u + length_{uv} / speed \}$.\\
\tab \tab \tab If $D_v >$ last time instant of $S_i$, then set $D_v$ to be $\infty$.\\
\tab \tab For each request $w$ in $S_{i + 1}$,\\
\tab \tab \tab Set $A_w$ to be $\max \{$ first time instant of $S_{i + 1}, \min_{v \in S_i} \{ D_v + d(v, w) / speed \} \}$\\
\tab \tab \tab If $A_w >$ last time instant of $S_{i + 1}$, then set $A_w$ to be $\infty$.\\
\tab If there exists a request $v \in S_m$ such that $D_v < \infty$, \\
\tab \tab then return ``feasible speed'', else return ``speed too slow''.\\
\bottomrule
\end{tabular}
\end{table}

\begin{theorem}
\label{theorem:deliverytree}
For any $\epsilon > 0$, in $O(n^3 \log {1 \over \epsilon})$ time DELIVERY-TREE finds a service tour of speed at most $4 + \epsilon$ times the optimal speed.
\end{theorem}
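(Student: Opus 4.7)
The plan is to prove correctness and performance of DELIVERY-TREE by combining three ingredients: (i) correctness of TEST-SPEED as a feasibility oracle on the trimmed instance, (ii) the Small Speedup Theorem, which bounds how much trimming inflates the optimal speed, and (iii) a standard multiplicative binary-search argument to close the remaining gap to within any desired $\epsilon$.

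For TEST-SPEED, the key structural fact is the per-period subproblem: within a single period $S_i$, the shortest tour starting at $u \in S_i$, ending at $v \in S_i$, and visiting every request in $S_i$ is produced by the doubled-Euler-path construction described at the start of this section (retain the minimal spanning subtree of $\{u,v\}$ together with all in-period request locations, double every edge off the direct $u$-$v$ path, and read off the Euler path). This yields $length_{uv}$ exactly. The dynamic program then advances $A_u$ and $D_v$ period by period, clamping $A_w$ to the start of $S_{i+1}$ (the deliveryman cannot service $w$ before the window opens) and setting any value exceeding its period's end to $\infty$ (pruning infeasible prefixes). By induction on $i$, TEST-SPEED returns ``feasible speed'' iff some service tour exists for the trimmed instance at the queried speed.

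For the approximation bound, let $s^*$ denote the optimal speed on the untrimmed instance and $s_T^*$ the optimal speed on the trimmed instance; Theorem~\ref{theorem:deliverytrimming} gives $s_T^* \leq 4 s^*$. Set $\epsilon' = \epsilon/4$. Bootstrapping a multiplicative bracket of constant ratio around the feasibility threshold (for instance, by repeatedly doubling a trivial lower bound on speed until TEST-SPEED first succeeds) and then performing $\log(1/\epsilon')$ rounds of multiplicative bisection narrows the feasibility bracket to ratio at most $1+\epsilon'$. The slowest feasible speed returned is therefore at most $(1+\epsilon')\,s_T^* \leq 4(1+\epsilon')\,s^* = (4+\epsilon)\,s^*$, which is the claimed guarantee.

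For the running time, the values $length_{uv}$ over all pairs within all periods can be precomputed once in $O(n^3)$ total, using all-pairs tree distances together with the doubling construction. Each subsequent TEST-SPEED call then costs $O(n^2)$ time, since $\sum_i |S_i|^2 = O(n^2)$, and the binary search performs $O(\log(1/\epsilon))$ such calls, giving the stated $O(n^3 \log(1/\epsilon))$ bound. The main obstacle I anticipate is the per-period optimality claim underpinning $length_{uv}$: one must carefully argue that no tour visiting the in-period requests can be shorter than the Euler path on the doubled minimal subtree, and that the gluing of per-period optima through the $A$/$D$ recurrence does not miss any feasible global schedule near period boundaries. Once these structural facts are in hand, the binary-search accuracy analysis is a routine multiplicative-bracketing calculation.
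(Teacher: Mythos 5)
Your proposal follows the paper's architecture almost exactly: TEST-SPEED as a feasibility oracle built on the per-period doubled-Euler-path values $length_{uv}$, the Small Speedup Theorem contributing the factor of $4$, and a multiplicative binary search with $\epsilon' = \epsilon/4$ contributing the factor $1+\epsilon'$, so that the slowest feasible speed found is at most $4(1+\epsilon')\,s^* = (4+\epsilon)\,s^*$. Your elaboration of why $length_{uv}$ is exactly optimal within a period (parity of edge traversals in the Steiner subtree) and why the period-by-period gluing is lossless (trimmed windows force the service events of $S_i$ to precede those of $S_{i+1}$, and the clamping of $A_w$ to the period start models waiting) is a correct and welcome expansion of what the paper asserts tersely.

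There is, however, one genuine gap, and it concerns the running-time claim rather than the approximation ratio: your initialization of the search bracket. You propose ``repeatedly doubling a trivial lower bound on speed until TEST-SPEED first succeeds.'' The number of such doublings is $\Theta\bigl(\log(s_T^*/s_0)\bigr)$, where $s_0$ is the trivial lower bound, and this quantity is not bounded by any function of $n$ and $\epsilon$ alone --- with arbitrary real edge weights it is unbounded, and even with integer weights it is only weakly polynomial in the bit size of the input. Consequently your argument yields a running time of roughly $O\bigl(n^3 + n^2(\log(s_T^*/s_0) + \log(1/\epsilon))\bigr)$, not the stated strongly polynomial $O(n^3 \log(1/\epsilon))$. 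The paper closes exactly this hole by a different device: it invokes the MST/Euler-tour construction of the deliveryman-on-a-graph section (DELIVERY-GRAPH) on the trimmed instance, which in $O(n^3)$ time produces a concrete tour $Q$ with $\tfrac{1}{2}s(Q) \leq s(Q^*) \leq s(Q)$, where $Q^*$ is optimal over trimmed windows. This gives a constant-ratio bracket $[s(Q)/2,\, s(Q)]$ in which $O(\log(1/\epsilon'))$ bisection steps suffice, and it is the step your write-up is missing. To repair your proof, replace the doubling phase with any $O(n^3)$-time constant-factor estimate of the trimmed optimum --- the paper's 2-approximation being the natural choice --- after which the rest of your argument goes through unchanged.
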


\begin{proof}
Correctness follows because an earliest-arriving path found by TEST-SPEED that visits all requests during their periods implies that the speed is sufficiently fast.  
In the next section, we will give a technique which takes $O(n^3)$ time to find a service tour $Q$ 
on trimmed windows such that ${1\over 2}s(Q) \leq s(Q^*) \leq s(Q)$ 
where $Q^*$ is an optimal service tour over trimmed windows.  
Using TEST-SPEED, our algorithm DELIVERY-TREE
binary searches in this range to find a speed within a factor of $1 + \epsilon'$ of the optimal speed.  
We get a total of $O( n^3 \log {1 \over \epsilon })$ time for DELIVERY-TREE.
Using the Small Speedup Theorem applies a factor of 4, yielding a $(4 + \epsilon)$-approximation.
\end{proof}

\section{Deliveryman for a Graph}
\label{section:deliveryman-graph}

The algorithm for the deliveryman on a graph takes direct advantage
of the Small Speedup Theorem, using a minimum spanning tree (MST) algorithm as its main workhorse.
Given an instance of the problem on trimmed windows, we find a near-minimum-speed service tour in the following way.  For each period, we find an MST of the nodes in that period.  For every period $S_i$ but the last, we find the shortest path which connects the MST of the points in $S_i$ to the MST of the points in $S_{i+1}$.  Let the point in $S_i$ which is adjacent to that edge be called $v_i$, and let the point in $S_{i+1}$ be called $u_{i+1}$.

Within each period $S_i$, we double up all edges in the MST that are not 
on the direct path from $u_i$ to $v_i$ and sequence all edges into an Euler path.  
We then connect these $m$ Euler paths by the edges $(v_i, u_{i+1})$ for $1 \leq i < m$.  
Thus, we have created a service tour from $u_1$ to $v_m$.  
We then determine the minimum speed at which this service tour can be taken 
and still visit all requests during their trimmed time windows.

Let $c(u_i, v_j)$ denote the cost of traveling from $u_i$ to $v_j$ along the service tour.  For all pairs $i$ and $j$ where $1 \leq i \leq m$ and $i \leq j \leq m$, we find the speed needed to cover the tour from $u_i$ to $v_j$ and store these speeds in a set.  The value of each speed will be ${2 \cdot c(u_i, v_j) \over j - i + 1}$ as the factor of 2 accounts for the .5 unit-time windows.  We search in this list until we find the lowest speed at which we can visit all requests within their periods.  As one of these pairs of nodes must define the most constraining speed, we find the minimum speed at which we can travel $Q$.

The total running time for our algorithm DELIVERY-GRAPH is $O( n^3 )$.  For each node in each period, DELIVERY-GRAPH will run a single-source shortest path algorithm to find the weights needed to construct the MSTs, totaling $O( n^3 )$.  Then, all trees can be built and connected in $O( n^2 )$.  Finding the speeds for all pairs can also be done in $O(n ^2 )$.

Let $Q^*$ be the optimal tour over trimmed windows, and let $Q$ be the tour generated by our algorithm.  
Let $Q^*_i$ be the subtour of tour $Q^*$ restricted to requests inside period $S_i$,
and $Q_i$ be the subtour of tour $Q$ restricted to $S_i$.   Let $u^*_i$ be the first node in subtour $Q^*_i$ and $v^*_i$ the last.  Similarly, let $u_i$ be the first node in subtour $Q_i$ and $v_i$ the last.

\begin{lemma}
\label{lemma:deliveryspeed}
Let $s$ be the minimum required speed for $Q^*$.  Suppose a deliveryman $M^*$ is traveling along $Q^*$ at speed $\eta$ while another deliveryman $M$ is traveling along $Q$ at a speed that never exceeds $2\eta$.  Then for each $S_i$ deliveryman $M^*$ will never arrive at the first node in $Q^*_i$ before deliveryman $M$ arrives at the first node in $Q_i$.
\end{lemma}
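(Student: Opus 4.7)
The plan is to proceed by induction on the period index $i$, where $\alpha_i$ denotes the arrival time of $M^*$ at $u^*_i$ and $\beta_i$ the arrival time of $M$ at $u_i$; the goal is to show $\beta_i \leq \alpha_i$ for every $i$. The base case $i=1$ is immediate under the natural convention that both deliverymen depart from their respective starting nodes at a common reference time (for instance, the start of $S_1$).

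For the inductive step I would have $M$ travel at the maximum allowed speed of $2\eta$, so that
\[
\beta_{i+1} = \beta_i + \frac{c(Q_i) + d(v_i, u_{i+1})}{2\eta}
\qquad\text{and}\qquad
\alpha_{i+1} = \alpha_i + \frac{c(Q^*_i) + d^*(v^*_i, u^*_{i+1})}{\eta}.
\]
Combined with the inductive hypothesis $\beta_i \leq \alpha_i$, the desired conclusion $\beta_{i+1} \leq \alpha_{i+1}$ reduces to the purely geometric inequality
\[
c(Q_i) + d(v_i, u_{i+1}) \;\leq\; 2\bigl(c(Q^*_i) + d^*(v^*_i, u^*_{i+1})\bigr).
\]

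I would split the proof of this inequality into two bounds. First, the inter-period hop: by construction, $(v_i, u_{i+1})$ is the pair minimizing graph distance between a node of $S_i$ and a node of $S_{i+1}$, whereas $d^*(v^*_i, u^*_{i+1})$ is the (possibly longer) distance that $Q^*$ actually travels between $v^*_i$ and $u^*_{i+1}$; hence $d(v_i, u_{i+1}) \leq d^*(v^*_i, u^*_{i+1})$. Second, the intra-period subtour: the construction of $Q_i$ by doubling the appropriate MST edges and extracting an Euler path yields $c(Q_i) \leq 2\, c(T_i)$, where $T_i$ denotes the MST of the requests of $S_i$ in the shortest-path metric.

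The main obstacle I expect is justifying $c(T_i) \leq c(Q^*_i)$. Although $Q^*_i$ may traverse vertices outside $S_i$ and need not be a concatenation of shortest paths, the order in which $Q^*_i$ visits the requests of $S_i$ induces a spanning path on those requests in the shortest-path metric, whose total weight is bounded above by $c(Q^*_i)$; the MST $T_i$ is therefore no heavier. Chaining $c(Q_i) \leq 2\, c(T_i) \leq 2\, c(Q^*_i)$ with the hop bound delivers the displayed inequality and closes the induction.
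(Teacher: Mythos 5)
Your proposal is correct and takes essentially the same route as the paper's proof: induction on the period index using exactly the same two bounds, namely $d(v_i, u_{i+1}) \leq d(v^*_i, u^*_{i+1})$ for the inter-period hop and the factor-2 relation between the intra-period subtour lengths. The only difference is that the paper merely asserts ``the length of subtour $Q^*_{i-1}$ is at least half the length of subtour $Q_{i-1}$,'' while you spell out its standard justification via the chain $c(Q_i) \leq 2\,c(T_i) \leq 2\,c(Q^*_i)$ (Euler-path doubling plus the spanning path on $S_i$ induced by the visiting order of $Q^*_i$) --- a welcome filling-in of detail, not a different argument.
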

\begin{proof} By induction on $i$.

\noindent Basis: ($i = 1$)

Deliverymen $M$ and $M^*$ start tours $Q$ and $Q^*$, respectively, at the same time.

\noindent Induction Step: ($i > 1$)

The length of subtour $Q^*_{i-1}$ is at least half the length of subtour $Q_{i-1}$.  The distance from $v^*_{i-1}$ to $u^*_i$ is never shorter than the distance from $v_{i-1}$ to $u_i$.  Thus, the total distance from $u^*_{i-1}$ to $u^*_i$ in $Q^*$ is never less than half the total distance from $u_{i-1}$ to $u_{i}$ in $Q$.  Since deliveryman $M$ is traveling $Q$ at a maximum speed which is twice the maximum speed that deliveryman $M^*$ travels $Q^*$, $M$ arrives at $u_i$ no later than $D^*$ arrives at $u^*_i$.
\end{proof}

\begin{theorem}
\label{theorem:deliverygraph}
In $O(n^3)$ time DELIVERY-GRAPH finds a service tour of speed at most 8 times the optimal speed.
\end{theorem}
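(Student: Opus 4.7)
The plan is to combine three factors multiplicatively: a factor of $4$ from the Small Speedup Theorem to move from the optimal untrimmed tour $Q^*$ to an optimal trimmed tour, and then a factor of $2$ from the MST-doubling construction, established via Lemma \ref{lemma:deliveryspeed}. The running time bound has already been argued in the paragraph preceding the theorem statement, so the main work is the speed bound.

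First, I would let $Q^*$ denote an optimal service tour on the original (untrimmed) requests, with speed $s^* = s(Q^*)$. By Theorem \ref{theorem:deliverytrimming}, there exists a service tour on the trimmed requests using speed at most $4 s^*$; let $Q^*_T$ be an optimal tour on trimmed requests, so its required speed $s_T$ satisfies $s_T \le 4 s^*$. The goal then reduces to showing that DELIVERY-GRAPH produces a tour $Q$ that is feasible on the trimmed instance at speed $2 s_T$.

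Next, I would verify the per-period length comparison that powers Lemma \ref{lemma:deliveryspeed}. Since $Q^*_T$ visits every request in period $S_i$ within that period, the subtour $Q^{*}_{T,i}$ induces a spanning walk on the requests of $S_i$, so the MST of those requests has cost at most $\mathrm{length}(Q^{*}_{T,i})$. Doubling every MST edge off the direct $u_i$–$v_i$ path and taking the resulting Euler path gives a $Q_i$ of length at most $2 \cdot \mathrm{length}(Q^{*}_{T,i})$. For the transition edges, the pair $(v_i,u_{i+1})$ is chosen as a shortest edge between MST$_i$ and MST$_{i+1}$, hence at most $d(v^*_i, u^*_{i+1})$, the corresponding hop in $Q^*_T$.

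Armed with these two inequalities, I would invoke Lemma \ref{lemma:deliveryspeed} with $\eta = s_T$: a deliveryman $M$ traveling $Q$ at speed $2 s_T$ arrives at each $u_i$ no later than $M^*$ arrives at $u^*_i$ along $Q^*_T$ at speed $s_T$. Then $M$ traverses $Q_i$ in time at most $\mathrm{length}(Q_i)/(2 s_T) \le \mathrm{length}(Q^{*}_{T,i})/s_T$, which is no larger than the time $M^*$ spends inside $S_i$; hence $M$ completes all services in $S_i$ before the end of the period. Applying this inductively over all periods shows $2 s_T$ is feasible for $Q$. Because DELIVERY-GRAPH searches over all pairs $(u_i,v_j)$ and returns the minimum feasible speed, the speed it returns is at most $2 s_T \le 8 s^*$, yielding the claimed bound. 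The main subtle point — and the one I expect to need the most care — is lining up the per-period correspondence between $Q$ and $Q^*_T$ so that Lemma \ref{lemma:deliveryspeed}'s invariant genuinely implies feasibility, rather than only comparing arrival times at the period boundaries; the length inequality $\mathrm{length}(Q_i) \le 2\,\mathrm{length}(Q^{*}_{T,i})$ together with the doubled speed is what closes this gap.
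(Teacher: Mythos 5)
Your proof is correct and takes essentially the same route as the paper's: the factor of $4$ from the Small Speedup Theorem multiplied by the factor of $2$ from Lemma \ref{lemma:deliveryspeed}, whose two underlying length inequalities (the per-period MST cost is at most the length of the optimal trimmed subtour, and the inter-period connector is at most the optimal hop) you justify explicitly where the paper merely asserts them. The per-period feasibility point you flag as subtle is exactly the content of the paper's induction, so there is no gap.
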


\begin{proof}
Correctness follows because DELIVERY-GRAPH finds the lowest possible speed by examining all pairs of nodes.  The factor of 2 given by Lemma \ref{lemma:deliveryspeed} multiplied by the factor of 4 given by the Small Speedup Theorem for trimming shows that DELIVERY-GRAPH returns a service tour $Q$ such that $s(Q) \leq 8 \ s(Q^*)$, 
where $Q^*$ is an optimal service tour.
\end{proof}

\section{NP-hardness on a Tree}
\label{section:np-hardness}

By a reduction to TSP, the traveling repairman problem with unit-time windows is APX-hard on a weighted, metric graph.
For the case of a line, NP-completeness proofs for many of the time-constrained traveling salesman problems were given in \cite{Tsitsiklis}, 
but we know of no proof that the unit-time window repairman problem on a line given in \cite{Bar-Yehuda} is NP-hard.  Below, we consider the hardness of repairman when the problem is on a tree.

\begin{theorem}
The traveling repairman problem with unit-length time windows whose nodes are connected by a tree-shaped network is NP-hard.
\end{theorem}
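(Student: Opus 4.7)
The plan is to reduce from \textsc{Partition}: given positive integers $a_1, a_2, \ldots, a_n$ summing to $2B$, decide whether there is a subset $S \subseteq \{1, \ldots, n\}$ with $\sum_{i \in S} a_i = B$. I would build a tree instance of the unit-time-window repairman whose optimal profit crosses a threshold $T$ exactly when such an $S$ exists. Since the repairman on a line is already suspected to be NP-hard and trees strictly generalize lines, building the gadgets in a tree gives us extra geometric slack to encode arbitrary binary choices.

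First, I would construct a ``spine'' path representing the passage of time. The spine is a long path whose traversal from one end to the other consumes exactly $2B$ units of travel, split into $n$ equal-length segments separated by junction nodes $v_1, v_2, \ldots, v_n$. At each junction $v_i$ I attach a small gadget tree: two pendant branches hanging off $v_i$, both short (sublinear compared with $B$), each carrying a single service request. The time windows of these two requests are placed so that they open slightly before and slightly after the moment a ``canonical'' repairman would pass $v_i$ while moving at unit speed. Exactly one of the two requests at $v_i$ can be reached within its unit window, and which one it is depends on whether the repairman is ``ahead of schedule'' or ``behind schedule'' by an amount tied to $a_i$. In parallel, I attach a second layer of ``reward'' requests along the spine itself whose unit windows tile time exactly, so that each spine request contributes profit if and only if the repairman is traveling at the right speed through its period.

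Second, I would calibrate the gadgets so that skipping the left pendant at $v_i$ saves $a_i$ units of travel time while skipping the right pendant costs $a_i$ units of travel time, relative to a baseline schedule. Since the repairman starts at an arbitrary time and must ultimately arrive at the far end of the spine in time to collect its remaining reward requests, the total net time gained and lost over the $n$ gadgets must equal zero. This parity condition translates directly into $\sum_{i \in S} a_i = \sum_{i \notin S} a_i = B$. The threshold $T$ is set to the sum of all unavoidable spine reward profits plus exactly $n$ gadget profits, so that $T$ is attained iff every gadget contributes a single request and the total travel is balanced.

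The hard part will be designing the pendants and their unit windows so that the mutual exclusion at each $v_i$ is truly forced, while keeping the tour feasible when the partition exists. I would handle this by placing the windows of the reward requests on the spine densely enough that falling behind or racing ahead by more than the allowed slack causes a cascade of missed windows, which exceeds any profit that could be salvaged by deviating from the intended pattern. Polynomial size of the construction follows from the pseudopolynomial nature of \textsc{Partition}, which suffices for (weak) NP-hardness; if the statement is strengthened to strong NP-hardness, I would instead reduce from \textsc{3-Partition}, replacing each gadget with a three-way choice gadget and using three reward tracks along the spine. In either case, correctness reduces to checking the forward direction (a balanced partition yields a schedule achieving $T$) and the backward direction (any service run of profit $\geq T$ must choose exactly one pendant per gadget and therefore encodes a balanced partition).
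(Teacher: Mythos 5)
There is a genuine gap, and it is concrete: your construction, as described, is not a polynomial-time reduction. Your spine consumes $2B$ units of travel and you propose ``reward requests along the spine whose unit windows tile time exactly,'' with the tiling dense enough that drifting off schedule by more than the allowed slack causes a cascade of missed windows. Tiling an interval of length $\Theta(B)$ with unit windows requires $\Theta(B)$ requests, which is exponential in the bit-size of a \textsc{Partition} instance (the $a_i$ are encoded in binary). Your closing remark that ``polynomial size of the construction follows from the pseudopolynomial nature of \textsc{Partition}, which suffices for (weak) NP-hardness'' has the logic backwards: a reduction from the weakly NP-hard \textsc{Partition} must itself be polynomial in the binary encoding; a pseudopolynomial-size construction establishes nothing, since \textsc{Partition} with unary numbers is in P. Switching to \textsc{3-Partition}, which you mention only as an optional strengthening, would in fact be \emph{required} to license a construction of size polynomial in $B$ --- but then the entire gadget design would still have to be carried out. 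And that is the second gap: the mutual-exclusion mechanism at each $v_i$ (that exactly one pendant is reachable, that the choice shifts the schedule by exactly $\pm a_i$, and that no run of profit $\geq T$ can trade a few missed unit-profit rewards for large timing slack) is exactly the crux of the soundness direction, and you explicitly defer it. With unit profits, a cheating run that abandons $O(1)$ rewards to buy $a_i$ units of slack must be ruled out by careful threshold accounting, and nothing in the proposal pins this down; nor is the unrooted start (arbitrary initial time and location) neutralized --- dense windows alone do not force your canonical schedule without that accounting.

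For contrast, the paper's proof shows this machinery is unnecessary. It also reduces from \textsc{Partition} (with sum $2K$), but uses a single star: a center $u$ with a pendant leaf of edge-cost $a_i$ per integer, endpoints $s$ and $t$ at distance $6K$ with windows $[0,6K]$ and $[12K,18K]$, all integer leaves and $u$ carrying window $[6K,12K]$, and a midpoint node $v$ at distance $K$ carrying \emph{two} requests with windows $[3K,9K]$ and $[9K,15K]$. All windows have length exactly $6K$, so uniformity holds after rescaling, the construction has $O(n)$ nodes with binary edge weights, and the rendezvous at $v$ at time $9K$ forces the repairman to service a subset of leaves summing to exactly $K$ before $v$ and the complementary subset (also summing to $K$) after, with feasibility of servicing \emph{all} requests equivalent to the existence of the partition. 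The timing constraint alone does the forcing; no per-gadget exclusion argument or reward cascade is needed, and the soundness direction is immediate from the travel-time arithmetic.
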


\begin{proof}
We use a reduction from a version of the partition problem restricted to positive values:  
Given a multiset of $n$ positive integers, 
decide whether the multiset can be partitioned into two multisets which sum to the same value, 
i.e.~half the sum of all of the integers in the multiset.  This problem is NP-complete \cite{Karp}.

Our reduction is as follows.  
We assume that the sum of the integers in the multiset is $2K$.  
First create a central node $u$ in a tree by itself.  
For each integer in the multiset, create a node and connect it to $u$ with an edge having the cost of the integer.  
Then create the start and end nodes $s$ and $t$ and connect them to $u$ with edges of cost $6K$.  
Also create the midpoint node $v$ and connect it to $u$ with an edge of cost $K$.  
To complete the input for the repairman problem, we must also create service requests having both a time window and a node for a location.  
Create a service request with time window $[0, 6K]$ located at $s$ and a request with window $[12K, 18K]$ located at $t$.  
Create requests located at each of the nodes corresponding to an integer and also at the central node $u$ all with time window $[6K, 12K]$.  
Finally, create two requests located at the midpoint node $v$, one with time window $[3K, 9K]$ and the other with time window $[9K, 15K]$.  Recall that our definition of service requests allows multiple requests to share a single node as a location.  
Note that the graph is a tree and that all the time windows are exactly $6K$ units long, meeting the unit-length requirement.  
The optimal tour has just enough time to visit all of the nodes if and only if 
it starts at the start node, visits a set of nodes whose integers sum to $K$, visits the midpoint node, 
visits the remaining nodes which also sum to exactly $K$, and finally ends at the end node.
\end{proof}

While this proof assumes intervals closed on both ends,
it is trivial to modify the proof for intervals closed on one end and open on the other.
A proof that the speeding deliveryman problem on a tree with unit-time windows is NP-hard follows the same form.

\section{Handling Nonzero Service Times}
\label{section:service-times}

Some versions of the traveling repairman problem may require a non-zero service time for each service event.  We briefly discuss two natural models for service times.  In the first, the interval of service time must be completely contained within the time window for the service event.  In the second, the interval of service time needs only to start within the corresponding time window and but not necessarily to finish within the time window.

In the first model, we can easily add a uniform service time $\mu < 1$ to our solution.  For each request, we create a new node connected only to the request node under consideration with an edge requiring ${ \mu \over 2}$ time to cross at the given speed.  Given that the original request had a time window of $[t, t + 1)$, we add a request at the new node with a time window of $[t + {\mu \over 2}, t + 1 - {\mu \over 2})$.  Then, we remove the original request.  By this construction, we guarantee that there is a delay of $\mu$ after visiting the original request and that the original request is visited in the correct interval of $[t, t + 1 - \mu)$.  Also, after preprocessing, all time windows will be of length $1 - \mu$, satisfying the uniform time window requirement for our algorithms.

In the second model, we can add arbitrary length service times in a similar way.  Consider a single service request $r$ to which we wish to add a service time of length $\mu_r$.  We again add a node connected only to the node in question with an edge requiring $\mu_r/2$ time to cross at the given speed.  Given that the original request has a time window of $[t, t + 1)$, we add a request to the new node with a time window of $[t + \mu_r/2, t + 1 + \mu_r/2)$ and then remove the original request.  Again, this construction introduces the necessary delays while maintaining the original constraints and uniform time windows.

\section{Repairman with Windows of Different Length}
\label{section:different-repairman}

In this section we present an algorithm that achieves a constant-factor approximation for the traveling repairman on windows with length between 1 and 2 and then explain how the same ideas can be extended to general time windows. For integral edge lengths and window release and deadline times, $O(\log^2 n)$ and $O(\log D_{\max})$-approximations are given in \cite{Bansal} for the general, rooted repairman problem, where $D_{\max}$ is the latest time a time window ends.  After the initial publication of our work on unit time windows in \cite{Frederickson3}, an extension to windows with length between 1 and 2 was given in \cite{Chekuri3} for the unrooted repairman problem.  Under the same assumptions, the better bound of $O(\log L)$ is achieved in \cite{Chekuri3}, where $L$ is the longest time window.

Here we present improved approximation factors for the case that windows have length between 1 and 2.  It was also claimed in \cite{Chekuri3} that a constant approximation for this case allows a $O(\log D)$-approximation for the general, unrooted repairman problem, where $D$ is the ratio of the length of the longest time window to the length of the shortest time window.  We give an algorithm and analysis for the general, unrooted problem with improved constants and an $O(\log_b D)$-approximation for any fixed base $b$ of the logarithm.

Below we describe the algorithm WINDOW12 which allows us to approximate the repairman problem when windows have length between 1 and 2.  To unify notation, let $\Gamma(n)$ represent the running time for repairman approximations with trimmed windows on either a metric graph or a tree, as appropriate.  As shown in Theorems \ref{theorem:repairtree} and \ref{theorem:repairgraph}, $\Gamma(n)$ is $O(n^4)$ for a tree and $O(n^4 \Lambda(n, \epsilon))$ for a metric graph.  Likewise, let the approximation ratios for the repairman algorithms used after trimming be $\gamma$, where $\gamma = 1$ for a tree, where $k$-SSP can be solved optimally, and $\gamma = {2 + \epsilon}$ for a metric graph, as shown in Lemma \ref{lemma:repair profit factor}.

As pointed out in \cite{Chekuri3}, a relatively simple extension of our Limited Loss Theorem allows one to achieve a ${5\gamma}$-approximation when windows have length between 1 and 2.  A more refined approach that we now describe will improve substantially on this constant.  Our approach is to try several different sizes for periods.  When most of the windows are of length closer to 1, then a period size of 1/2 works well.  When most of the windows are of length closer to 2, then a period size of 1 works well.  When many of the windows are of length closer to 3/2, then a period size of 3/4 works well.

For each period size, we will consider multiple starting points for a set of periods, each spaced 1/4 apart.  Thus, sets of periods whose period sizes are 1/2, 3/4, and 1 will have 2, 3, and 4 unique starting positions, respectively.  Depending on a given period size and starting point, a window will partially fill 2 subintervals and fully fill 0, 1, 2, or 3 subintervals between the 2 partial intervals.  Let $W_\ell$ be the set of windows
that completely fills exactly $\ell$ subintervals and partially overlaps with two more of them.

When trimming, we may have to select from among several choices of which single full subinterval to keep for each window.  For example, for periods of length 1/2 and for windows in $W_3$ which would have three full subintervals, the choices for trimming will be the first, second, or third full subinterval.  Combining these choices with the two choices associated with windows in $W_2$ and the single choice in windows in $W_1$ would yield 6 trimmings, or, in general, $k!$ where $k$ is the largest number of subintervals completely filled.  Let REPAIR be the appropriate basic repairman algorithm on trimmed windows, either for a tree or for a metric graph, described in Sect.~\ref{section:repairman-tree} or \ref{section:repairman-graph}.  For each period size, for each starting point, for each choice of trimming, we will run REPAIR and keep the result if the profit is better than a previous run.

\begin{table}[!htb]
\begin{tabular}{l}
\toprule
\textbf{WINDOW12} \\
\midrule
\tab \emph{PHASE 1:}\\
\tab Set the period size to 1/2 and identify windows for sets $W_1$, $W_2$, and $W_3$.\\
\tab For $i$ from $0$ to $1$,\\
\tab \tab Set the starting point for the periods to $i/4$.\\
\tab \tab For $j$ from $1$ to 2,\\
\tab \tab \tab For $k$ from $1$ to 3,\\
\tab \tab \tab \tab Trim each window in $W_1$ to its $1^{st}$ full subinterval.\\
\tab \tab \tab \tab Trim each window in $W_2$ to its $j^{th}$ full subinterval.\\
\tab \tab \tab \tab Trim each window in $W_3$ to its $k^{th}$ full subinterval.\\
\tab \tab \tab \tab Run REPAIR and retain the best result so far.\\
\tab \emph{PHASE 2:}\\
\tab Set the period size to 3/4 and identify windows for $W_1$ and $W_2$.\\
\tab For $i$ from $0$ to $2$,\\
\tab \tab Set the starting point for the periods to $i/4$.\\
\tab \tab For $j$ from $1$ to 2,\\
\tab \tab \tab Trim each window in $W_1$ to its $1^{st}$ full subinterval.\\
\tab \tab \tab Trim each window in $W_2$ to its $j^{th}$ full subinterval.\\
\tab \tab \tab Run REPAIR and retain the best result so far.\\
\tab \emph{PHASE 3:}\\
\tab Set the period size to 1 and identify windows for $W_1$.\\
\tab For $i$ from $0$ to $3$, \\
\tab \tab Set the starting point for the periods to $i/4$.\\
\tab \tab Trim each window in $W_1$ to its $1^{st}$ full subinterval.\\
\tab \tab Run REPAIR and retain the best result so far.\\
\bottomrule
\end{tabular}
\end{table}

When window lengths are not all the same, our analysis depends on an averaging argument.  By using many service runs based on an optimal run, we can record the total number of times a given interval is visited by all runs.  From all the intervals of all the windows, we find one that is visited the least.  The number of times this interval is visited divided by the total number of runs is a lower bound on the fraction of profit collected, relative to optimal.

We analyze the performance of WINDOW12 as follows.
Let $R^*$ be an optimal service run for a repairman instance with time window lengths from 1 up to but not including 2.
For the sake of analysis, we introduce a new set of periods with duration $1/4$. If we split each window into subintervals of length $1/4$ along boundaries of these new periods, we get windows in sets $H_3$, $H_4$, $H_5$, $H_6$, and $H_7$.  Let the total fraction of profit in an optimal solution coming from windows in set $H_\ell$ be $h_\ell$.  Thus, $\sum_{\ell = 3}^7 h_\ell = 1$.

We use these subintervals to give a finer granularity when analyzing the performance of the algorithm run on periods of greater length, viz. $1/2$, $3/4$, and $1$.  Consider set $H_\ell$ of windows, $\ell = 3, 4, 5, 6, 7$ and period length $j/4$, for $j = 1, 2, 3, 4$.

Then, the number of full subintervals of a window in $H_\ell$ that are covered when the period length is $j/4$ is either
$\left\lfloor{(\ell - j - 1)/j}\right\rfloor \text{ or } \left\lceil{(\ell - j - 1)/j}\right\rceil$
depending on which set of periods is used.  The average coverage over all sets of periods with period length $j/4$ is 
${(\ell - j - 1)/j}$.  As before, let $\gamma$ be the approximation bound on the basic repairman algorithm on unit-time windows when trimming has already been done.

\begin{lemma}
\label{lemma:inverse}
Let $w_\ell$ be the fraction of total profit gained by visiting windows of type $W_\ell$ with an optimal path on untrimmed windows.  Let $\mathcal{W}$ be a collection of sets $W_\ell$ given by trimming windows in an instance of the repairman problem.  The fraction of profit for the best run on trimmed windows in that instance is at least 
$${1 \over \gamma} \sum_{W_\ell \in \mathcal{W}} {w_\ell \over \ell + 2}$$
\end{lemma}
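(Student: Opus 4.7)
I would prove Lemma \ref{lemma:inverse} by generalising the best-of-three averaging argument behind the Limited Loss Theorem to the $\ell+2$ subintervals formed when a window in $W_\ell$ is cut by the periods. Let $R^*$ be an optimal run on untrimmed windows. For every window $w \in W_\ell$ serviced by $R^*$, the event lies in exactly one of the $\ell+2$ subintervals of $w$: the two partials (each of length at most $p$) and the $\ell$ full subintervals (each of length exactly $p$). Write $g_{\ell,j}$ for the total profit of $R^*$-events falling in the $j$-th subinterval across all $W_\ell$-windows, so that $\sum_{j=0}^{\ell+1} g_{\ell,j} = w_\ell \pi(R^*)$.

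I would next exhibit, for each subinterval $j$ of each $W_\ell$, a derived (shift, trimming) combination that converts $R^*$ into a valid run on the corresponding trimmed instance capturing every $R^*$-event originally in the $j$-th subinterval of a $W_\ell$-window: the null shift paired with trimming $W_\ell$ to the $i$-th full captures events from the $i$-th full; a forward shift of $+p$ paired with trimming $W_\ell$ to the $1$st full captures events from the late partial; and a backward shift of $-p$ paired with trimming to the $\ell$-th full captures events from the early partial. Validity of the partial-shift cases follows from the same observation used in the Limited Loss Theorem: since each partial has length at most $p$, shifting by $\pm p$ carries its events cleanly into the adjacent full subinterval preserved by the trimming, and the unrooted framing allows the shifted $R^*$ to serve as a legitimate service run.

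I would then aggregate these candidate runs into a balanced family $\mathcal{F}$ indexed so that, within each $W_\ell$, every one of its $\ell+2$ subintervals is ``targeted'' by exactly a $1/(\ell+2)$ fraction of $\mathcal{F}$. Because the trimming choice for $W_\ell$ decouples from the trimming choice for $W_{\ell'}$, I would realise $\mathcal{F}$ as a product-type enumeration over per-$W_\ell$ trimming choices paired with global shifts drawn from $\{-p,0,+p\}$, replicated up to the least common multiple of $\{\ell+2 : W_\ell \in \mathcal{W}\}$ so that the target counts $v_{\ell,j}$ work out to $|\mathcal{F}|/(\ell+2)$. Averaging captured profit over $\mathcal{F}$ then yields $\sum_\ell w_\ell/(\ell+2) \cdot \pi(R^*)$, so some member of $\mathcal{F}$ captures at least this much. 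That member corresponds to one of the trimmings enumerated by the $j$ and $k$ loops of WINDOW12 together with a purely analytic shift of $R^*$; since REPAIR is a $\gamma$-approximation on each trimmed instance, it returns on the winning instance a run of profit at least $\frac{1}{\gamma}\sum_\ell w_\ell/(\ell+2) \cdot \pi(R^*)$.

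The main obstacle is engineering $\mathcal{F}$ so that the counts balance exactly: the extreme (partial) subintervals are reachable only from the single global shift $+p$ or $-p$, whereas each interior full is reachable by the null shift as well, so the naive product enumeration is unbalanced at the boundary, giving a min-visit ratio of roughly $1/(3\ell)$ rather than the desired $1/(\ell+2)$. I would handle this by assigning boundary-targeting (shift, trim) pairs larger multiplicities in the LCM-based replication and then verifying by direct enumeration that the resulting $v_{\ell,j}$ counts are uniform over $j$ for each $\ell$; this replication-and-bookkeeping step, while tedious, is the technical heart of the argument, and it is enabled crucially by the independence of trimming choices across different $W_\ell$'s that WINDOW12 already exploits in its nested loops.
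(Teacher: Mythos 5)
Your strategy is sound and it is genuinely different from the paper's. The paper never constructs an averaging family: its proof is a sequential discard-and-match argument. It takes the largest class $W_k$, discards the least profitable of its $k+2$ subintervals (losing at most a $1/(k+2)$ fraction of $w_k$), matches the surviving $k+1$ subintervals in increasing index order with the $k+1$ subintervals of $W_{k-1}$, discards the least profitable pair, matches with $W_{k-2}$, and so on; the telescoping factors $\frac{\ell+1}{\ell+2}\cdot\frac{\ell}{\ell+1}\cdots\frac12$ yield exactly the coefficients $\frac{1}{\ell+2}$, and the single surviving tuple is the witness trimming. Notably, the paper leaves implicit the realizability fact that you make explicit: a tuple of subintervals (one per $W_\ell$) is capturable by a single run derived from $R^*$ only if there is a common global shift in $\{-p,0,+p\}$ compatible with all of its entries, and the increasing-index matching is precisely what guarantees this staircase compatibility. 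So your proposal buys a cleaner, more symmetric existence argument and a more transparent account of why shifts and trims compose, at the cost of having to exhibit the balanced family; the paper's greedy matching sidesteps the balancing problem entirely but is terse about feasibility of its tuples.

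The one real obligation you have deferred is the existence of the balanced family $\mathcal{F}$, and ``verifying by direct enumeration'' is not adequate as stated: the lemma is used not only in WINDOW12 (where $\ell\le 3$ and enumeration would do) but also, via $\rho(p,g)$, in the WINDOWG analysis where $k$ is unbounded, so you need a general argument. Fortunately one exists and is short, and you should substitute it for the bookkeeping plan. Fix the shift marginal to be uniform on $\{-p,0,+p\}$. For each $\ell$, targeting subinterval $j$ of $W_\ell$ is possible exactly when $j$ lies in one of the three index windows $\{1,\dots,\ell\}$ (shift $+p$), $\{2,\dots,\ell+1\}$ (shift $0$), or $\{3,\dots,\ell+2\}$ (shift $-p$); routing demand $1/(\ell+2)$ to each $j$ from supplies of $1/3$ per shift is a transportation problem whose Hall condition holds because $1/(\ell+2)\le 1/3$ and each boundary index is covered by its dedicated shift while any two indices are covered by at least two shifts. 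Since, conditioned on a shift, the trim choices decouple across the classes $W_\ell$, you can take the product of the per-$\ell$ conditional distributions to get a joint distribution with uniform $j_\ell$-marginals, and clearing denominators turns the rational weights into the integer multiplicities your $\mathcal{F}$ requires. With that substitution your proof is complete and matches the lemma's bound, including the $1/\gamma$ factor, exactly as in the paper.
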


\begin{proof}
Each window in set $W_\ell$ is divided into $\ell + 2$ different subintervals.  Label the fraction of profit from each of these respective subintervals $w_\ell^{(1)}$ through $w_\ell^{(\ell + 2)}$.  Note that $\sum_{i = 1}^{\ell + 2} w_\ell^{(i)} = w_\ell$.

Let $W_k$ be the set of windows in $\mathcal{W}$ which can be divided into the largest number of subintervals.  For $W_k$, there is a subinterval $i$ such that $w_k^{(i)} \leq w_k/(k + 2)$.  Ignore that subinterval and pair up the $k + 1$ subintervals from $W_{k - 1}$ in increasing order of index with the $k + 1$ remaining subintervals from $W_k$.  The total profit from these $k + 1$ pairs is at least $\sigma  = w_{k - 1} + (k + 1)w_k/(k + 2)$.  Of these pairs, there is a pair whose profit is no greater than $\sigma/(k + 1)$.  Then, ignore this pair and, in increasing order of index, match up the $k$ subintervals from $W_{k - 2}$ with the $k$ remaining pairs.  Repeat this process of ignoring a smallest tuple for a given $\ell$ and matching it up with the $\ell + 1$ subintervals from the next set of smaller windows $W_{\ell - 1}$.

At the end of the process, the remaining $k$-tuple will have a fraction of profit that is

$$\sum_{\ell = 1}^k {(\ell + 1)!w_\ell \over (\ell + 2)!} = \sum_{W_\ell \in \mathcal{W}} {w_\ell \over \ell + 2}$$
\end{proof}

\begin{lemma}
\label{lemma:step 3}
From among the two sets of periods and among the six different trimmings created in the first phase of WINDOW12, one such pair of choices yields a run $R$ such that $\pi(R)/\pi(R^*) \geq (h_3/3 + 7h_4/24 + h_5/4 + 9h_6/40 + h_7/5)/\gamma$.
\end{lemma}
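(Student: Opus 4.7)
The plan is to apply Lemma \ref{lemma:inverse} separately to each of the two shift choices in Phase 1 of WINDOW12 and then to average the resulting bounds, which is a valid lower bound on the maximum profit over the twelve runs that Phase 1 actually performs.

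First, for each shift $i \in \{0, 1\}$ (meaning the half-period grid begins at $i/4$), I would classify every window as $W_{k_i}$ according to how many full half-subintervals it contains under that shift. Letting $w_k^{(i)}$ denote the fraction of $\pi(R^*)$ arising from windows that are $W_k$ under shift $i$, Lemma \ref{lemma:inverse} tells us that the best of the six trimmings tried for shift $i$ yields a run whose profit is at least $(1/\gamma) \sum_{k=1}^{3} w_k^{(i)}/(k+2)$ times $\pi(R^*)$. Averaging over the two shifts gives a lower bound of $(1/(2\gamma)) \sum_{i \in \{0,1\}} \sum_{k=1}^{3} w_k^{(i)}/(k+2)$ on the profit of the best run that Phase 1 returns.

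Next I would rewrite this double sum in terms of $h_\ell$. Each window in $H_\ell$ contains $\ell$ consecutive full quarter-subintervals $q_1, \ldots, q_\ell$, and under a fixed shift the window's full half-subintervals correspond to one of the two pairings $\{(q_{2j-1}, q_{2j})\}$ or $\{(q_{2j}, q_{2j+1})\}$ of consecutive quarters. Since the $\ell - 1$ adjacent quarter pairs are partitioned between the two shifts, $k_0 + k_1 = \ell - 1$. The admissible $(k_0, k_1)$ pairs are $(1,1)$ for $\ell = 3$; $(2,1)$ or $(1,2)$ for $\ell = 4$; $(2,2)$ for $\ell = 5$; $(3,2)$ or $(2,3)$ for $\ell = 6$; and $(3,3)$ for $\ell = 7$. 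In each case $k_i \ge 1$, so no profit is lost to an unused $W_0$ class. Thus each window in $H_\ell$ contributes $1/(k_0 + 2) + 1/(k_1 + 2) = 1/(k_0 + 2) + 1/(\ell + 1 - k_0)$ to the inner double sum; the enumeration above shows this quantity depends only on $\ell$, taking the values $2/3$, $7/12$, $1/2$, $9/20$, and $2/5$ for $\ell = 3, 4, 5, 6, 7$, respectively. Summing these contributions weighted by $h_\ell$, halving, and multiplying by $1/\gamma$ produces the claimed bound $(h_3/3 + 7 h_4/24 + h_5/4 + 9 h_6/40 + h_7/5)/\gamma$.

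The main obstacle is the combinatorial bookkeeping needed to establish $k_0 + k_1 = \ell - 1$ cleanly for every window placement and to verify that the per-window contribution depends only on $\ell$ (in particular that it is the same for both orderings of $(k_0, k_1)$ when they differ). Everything else is a routine aggregation and averaging argument on top of Lemma \ref{lemma:inverse}.
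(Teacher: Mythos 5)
Your proposal is correct and takes essentially the same approach as the paper: apply Lemma \ref{lemma:inverse} once per Phase-1 shift, average the two resulting bounds to lower-bound the best of the twelve runs, and sum per-window contributions $1/(k_0+2)+1/(k_1+2)$ by $H_\ell$ class to obtain the stated coefficients. The only difference is that you explicitly derive the coverage counts via the alternation identity $k_0+k_1=\ell-1$ with $|k_0-k_1|\leq 1$ (and note the contribution is invariant under swapping $k_0$ and $k_1$, which matters since windows within the same $H_\ell$ can realize either ordering), whereas the paper simply asserts the per-shift contributions of $H_3$ through $H_7$ without this bookkeeping.
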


\begin{proof}
Consider the trimmed windows from two different shifts of periods from the first phase of WINDOW12.  By application of Lemma \ref{lemma:inverse}, we find the following contributions.  Windows from $H_3$ will contribute $h_3/3$ in both sets of periods.  Windows from $H_4$ will contribute $h_4/3$ in one set of periods and $h_4/4$ in the other.  Windows from $H_5$ will contribute $h_5/4$ in both sets of periods.  Windows from $H_6$ windows will contribute $h_6/4$ in one set of periods and $h_6/5$ in the other.  Finally, windows from set $H_7$ windows will contribute $h_7/5$ in both sets of periods.  When the values for both sets of periods are averaged together, the final result satisfies $\pi(R)/\pi(R^*) \geq (h_3/3 + h_4/6 + h_4/8 + h_5/4 + h_6/8 + h_6/10 + h_7/5)/\gamma = (h_3/3 + 7h_4/24 + h_5/4 + 9h_6/40 + h_7/5)/\gamma$.
\end{proof}

\begin{lemma}
\label{lemma:step 8}
From among the three sets of periods and among the two different trimmings created in the second phase of WINDOW12, one such pair of choices yields a run $R$ such that $\pi(R)/\pi(R^*) \geq (h_3/9 + 2h_4/9 + h_5/3 + 11h_6/36 + 5h_7/18)/\gamma$.
\end{lemma}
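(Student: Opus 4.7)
The plan is to mirror the proof of Lemma \ref{lemma:step 3}, adapted to Phase 2 of WINDOW12, which uses period size $3/4$ with three offsets (shifts $0$, $1/4$, $1/2$) and two trimming choices for windows in $W_2$. Since WINDOW12 keeps the best result over all (offset, trimming) combinations, the profit fraction it guarantees is at least the average of the per-offset lower bounds, so an averaging argument across the three offsets is the central technique.

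The first step would be to classify windows in each class $H_\ell$ by how many complete $3/4$-subintervals they contain under each of the three offsets. Because the three offsets advance in $1/4$-increments through one full $3/4$-period, they collectively sample the three distinct relative alignments of a window against the $3/4$-grid. Taking the adversarial length within each $H_\ell$ range, I expect the worst-case triples of coverage counts to be $(1,0,0)$ for $H_3$, $(1,1,0)$ for $H_4$, $(1,1,1)$ for $H_5$, $(2,1,1)$ for $H_6$, and $(2,2,1)$ for $H_7$. Thus some $H_3$ and $H_4$ windows are lost under certain offsets, while some $H_6$ and $H_7$ windows fall into $W_2$ rather than $W_1$.

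Next, for each offset I would invoke Lemma \ref{lemma:inverse}. Because the algorithm tries both possible trimmings of $W_2$ (and the unique one for $W_1$), the lemma's hypotheses are met and it yields a per-offset lower bound of $\left(w_1^{(i)}/3 + w_2^{(i)}/4\right)/\gamma$ on the achieved profit fraction. A window in $H_\ell$ with coverage $k$ under offset $i$ contributes $h_\ell/(k+2)$ to this quantity when $k \geq 1$, and $0$ otherwise. Averaging over the three offsets gives, for each $\ell$, a coefficient of $\frac{1}{3}$ times the sum of $1/(c_i(\ell)+2)$ over those offsets with $c_i(\ell) \geq 1$, where $c_i(\ell)$ is the coverage of $H_\ell$ under offset $i$. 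Plugging in the triples from the first step produces $1/9$, $2/9$, $1/3$, $11/36$, and $5/18$ for $\ell = 3, 4, 5, 6, 7$, which assemble into the claimed bound.

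The main obstacle will be rigorously verifying the coverage triples. Within each $H_\ell$ the window length ranges over a continuous sub-interval of $[1,2)$, and the coverage distribution across the three offsets depends on that length. To identify the length that minimizes the per-$\ell$ coefficient, I would compute the measure of positions yielding each coverage value as a function of window length and then optimize. Though this reduces to a finite case analysis given the explicit periodic structure, the bookkeeping for $H_6$ and $H_7$ requires the greatest care, since those classes can mix coverages $1$ and $2$ in various combinations across the offsets, and it is precisely there that the nonmonotone coefficient pattern ($1/3$, $11/36$, $5/18$) arises.
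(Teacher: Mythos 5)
Your proposal is correct and follows essentially the same route as the paper, which likewise applies Lemma~\ref{lemma:inverse} per offset with exactly your coverage triples $(1,0,0)$, $(1,1,0)$, $(1,1,1)$, $(2,1,1)$, $(2,2,1)$ and averages the resulting contributions over the three shifts to obtain the stated coefficients. The only simplification you missed is that no continuous optimization over window lengths is needed: since every Phase-2 period boundary lies on the quarter grid, the coverage triple for a window in $H_\ell$ is deterministically $\left(\left\lfloor \ell/3 \right\rfloor, \left\lfloor (\ell-1)/3 \right\rfloor, \left\lfloor (\ell-2)/3 \right\rfloor\right)$ up to permutation across offsets, so your final-paragraph worry about measure bookkeeping for $H_6$ and $H_7$ evaporates.
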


\begin{proof}
Consider the trimmed windows from three different shifts of periods from the second phase of WINDOW12.  By application of Lemma \ref{lemma:inverse}, we find the following contributions.  Windows from $H_3$ will contribute $h_3/3$ in one set of periods and nothing in the other two.  Windows from $H_4$ will contribute $h_4/3$ in two sets of periods and nothing in the other one.  Window from $H_5$ will contribute $h_5/3$ in all three sets of periods.  Windows from $H_6$ will contribute $h_6/3$ in two sets of periods and $h_6/4$ in the other one.  Windows from $H_7$ will contribute $h_7/3$ in one set of periods and $h_7/4$ in the other two.  Averaging the values for all three sets of periods gives the claimed result.
\end{proof}

\begin{lemma}
\label{lemma:step 13}
From among the four sets of periods created in the third phase of WINDOW12, one of them yields a run $R$ such that $\pi(R)/\pi(R^*) \geq (h_4/12 + h_5/6 + h_6/4 + h_7/3)/\gamma$.
\end{lemma}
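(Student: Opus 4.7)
My plan is to follow the same averaging template used in Lemmas~\ref{lemma:step 3} and~\ref{lemma:step 8}, now applied to the four equally spaced shifts (at starting points $0$, $1/4$, $1/2$, and $3/4$) of period length $1$ used in Phase 3 of WINDOW12. Since every window has length in $[1,2)$, under any alignment each window contains at most one full period of length $1$; consequently, trimming turns each window in class $H_\ell$ into either a $W_1$ (contributing $w/3$ of its profit by Lemma~\ref{lemma:inverse}) or a $W_0$ (contributing nothing).

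The crux of the argument is to compute, for each $\ell \in \{3,4,5,6,7\}$, the number of alignments (out of four) at which a length-$1$ period fits wholly inside a generic $H_\ell$ window. Writing such a window as $[a, a+L]$ with $a = k/4 - p_1$ and $L = \ell/4 + p_1 + p_2$, where $p_1, p_2 \in (0, 1/4)$ are the partial-subinterval lengths, one checks that the valid alignments are precisely those $i/4 \in \{0, 1/4, 1/2, 3/4\}$ lying modulo $1$ in the interval $[-p_1,\, (\ell-4)/4 + p_2]$. A direct case analysis yields counts of $0$, $1$, $2$, $3$, and $4$ for $H_3, H_4, H_5, H_6, H_7$ respectively.

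Averaging over the four alignments and dividing by $\gamma$ (the approximation factor of the underlying repairman subroutine from Lemma~\ref{lemma:inverse}) then gives
\[
\frac{\pi(R)}{\pi(R^*)} \;\geq\; \frac{1}{\gamma} \cdot \frac{1}{4}\!\left(\frac{h_4}{3} + \frac{2h_5}{3} + \frac{3h_6}{3} + \frac{4h_7}{3}\right) \;=\; \frac{1}{\gamma}\!\left(\frac{h_4}{12} + \frac{h_5}{6} + \frac{h_6}{4} + \frac{h_7}{3}\right),
\]
since at least one of the four shifts must meet or exceed the average. The one delicate point will be $H_3$: because we require $L \geq 1$, these windows force $p_1 + p_2 \geq 1/4$, and the resulting interval of valid alignments sits inside the open arc $(3/4, 1)$ on the circle, which contains no element of $\{0, 1/4, 1/2, 3/4\}$ under the strict inequalities $p_1, p_2 < 1/4$. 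The measure-zero boundary cases can be dispatched by an infinitesimal perturbation, as is already assumed in Section~\ref{section:trimming}.
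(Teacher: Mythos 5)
Your proposal is correct and takes essentially the same approach as the paper: the paper's proof likewise applies Lemma~\ref{lemma:inverse} to obtain a $h_\ell/3$ contribution per covered window class and asserts that windows in $H_3$, $H_4$, $H_5$, $H_6$, $H_7$ contain a full length-$1$ period under exactly $0$, $1$, $2$, $3$, $4$ of the four shifts respectively, then averages over the four shifts. Your explicit parameterization with $p_1, p_2$ (and the observation that the $H_3$ alignment interval lies in the open arc $(3/4,1)$) merely verifies the coverage counts that the paper states without calculation.
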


\begin{proof}
Consider the trimmed windows from four different shifts of periods from the third phase of WINDOW12.  By application of Lemma \ref{lemma:inverse}, we find the following contributions.  Windows from $H_3$ will contribute nothing in all four sets of periods.  Windows from  $H_4$ will contribute $h_4/3$ in one set of periods and nothing in the other three.   Windows from $H_5$ will contribute $h_5/3$ in two sets of periods and nothing in the other two.  Windows from $H_6$ will contribute $h_6/3$ in three sets of periods and nothing in the other one.  Windows from $H_7$ will contribute $h_7/3$ in all four sets of periods.  Averaging the values for all four sets of periods gives the claimed result.
\end{proof}

\begin{theorem}
In $O(\Gamma(n))$ time, algorithm WINDOW12 identifies a run $R$ such that\\ $\pi(R)/\pi(R^*) \geq 52/(219\gamma)$.
\end{theorem}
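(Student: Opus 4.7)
The plan is to combine the three phase-specific bounds from Lemmas \ref{lemma:step 3}, \ref{lemma:step 8}, and \ref{lemma:step 13} with the fact that WINDOW12 retains the best run seen across all phases. Concretely, if $R$ is the output, then $\pi(R)/\pi(R^*)$ is at least the maximum of the three per-phase ratios, and hence at least any convex combination of them. The task reduces to minimizing this maximum over the simplex $\{(h_3,\dots,h_7)\ge 0 : \sum_\ell h_\ell = 1\}$ and showing the minimum value is $52/(219\gamma)$.

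First I would write out the three lower bounds explicitly, setting
\[
f_1 = \tfrac{h_3}{3}+\tfrac{7h_4}{24}+\tfrac{h_5}{4}+\tfrac{9h_6}{40}+\tfrac{h_7}{5},\quad
f_2 = \tfrac{h_3}{9}+\tfrac{2h_4}{9}+\tfrac{h_5}{3}+\tfrac{11h_6}{36}+\tfrac{5h_7}{18},\quad
f_3 = \tfrac{h_4}{12}+\tfrac{h_5}{6}+\tfrac{h_6}{4}+\tfrac{h_7}{3},
\]
so $\pi(R)/\pi(R^*)\ge \max(f_1,f_2,f_3)/\gamma$. Next I would search for nonnegative weights $\alpha_1,\alpha_2,\alpha_3$ summing to $1$ such that the linear combination $\alpha_1 f_1+\alpha_2 f_2+\alpha_3 f_3$ has, after collecting like terms, the same coefficient $52/219$ on every $h_\ell$ that can be positive at the worst case. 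Once such weights are found, on the simplex we obtain
\[
\max(f_1,f_2,f_3)\ \ge\ \alpha_1 f_1+\alpha_2 f_2+\alpha_3 f_3\ \ge\ \tfrac{52}{219}\sum_\ell h_\ell = \tfrac{52}{219},
\]
which gives the claimed bound after dividing by $\gamma$. The running-time claim is immediate: each phase makes $O(1)$ calls to REPAIR, so the total time is $O(\Gamma(n))$.

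The main obstacle is identifying the correct dual weights $(\alpha_1,\alpha_2,\alpha_3)$ and the active set of window types at the worst case. I would set this up as the LP dual of minimizing $\max_i f_i$ over the simplex: the optimum is characterized by a choice of weights that equalizes the coefficients of $h_\ell$ (for $\ell$ in the active support) across the combined objective. Concretely, I would solve the small system $\alpha_1(\text{coef of }h_\ell\text{ in }f_1)+\alpha_2(\text{coef of }h_\ell\text{ in }f_2)+\alpha_3(\text{coef of }h_\ell\text{ in }f_3) = 52/219$ for $\ell$ in the conjectured active set, together with $\alpha_1+\alpha_2+\alpha_3=1$, and verify (a) that the resulting $\alpha_i$ are nonnegative, and (b) that the coefficients of any inactive $h_\ell$ in the combined objective are at least $52/219$ (so the bound still holds when mass is placed on those $h_\ell$).

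Finally I would verify tightness by exhibiting an $h$ supported on the active set that attains $52/219$ simultaneously across the phases whose weight is positive, confirming that the constant $52/219$ cannot be improved by this averaging argument. This certificate doubles as a sanity check that the linear algebra in the previous step was executed correctly, since an arithmetic slip would manifest as a gap between the upper and lower bounds.
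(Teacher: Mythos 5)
Your proposal is correct and follows essentially the same route as the paper: the paper likewise lower-bounds $\pi(R)/\pi(R^*)$ by a convex combination of the three phase bounds from Lemmas \ref{lemma:step 3}, \ref{lemma:step 8}, and \ref{lemma:step 13}, using weights $x = 50/73$, $y = 6/73$, $z = 17/73$, under which every coefficient of $h_3,\dots,h_7$ equalizes exactly to $52/219$ (so your conjectured ``inactive set'' turns out to be empty), and counts $12+6+4=22$ calls to REPAIR for the $O(\Gamma(n))$ running time. Your LP-dual framing and tightness certificate are a slightly more careful packaging of the same argument, not a different approach.
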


\begin{proof}
By Lemmas \ref{lemma:step 3}, \ref{lemma:step 8}, and \ref{lemma:step 13}, $\pi(R)/\pi(R^*) \geq {1 \over \gamma} \max\{h_3/3 + 7h_4/24 + h_5/4 + 9h_6/40 + h_7/5$, $h_3/9 + 2h_4/9 + h_5/3 + 11h_6/36 + 5h_7/18$, $h_4/12 + h_5/6 + h_6/4 + h_7/3\}$.  Thus, for any convex combination using positive $x$, $y$, and $z$, where $x + y + z = 1$, we have 
\begin{eqnarray*}
\pi(R)/\pi(R^*) \geq {1 \over \gamma} \max_{x + y + z = 1}\Big\{&x(h_3/3 + 7h_4/24 + h_5/4 + 9h_6/40 + h_7/5)&\\
+ &y(h_3/9 + 2h_4/9 + h_5/3 + 11h_6/36 + 5h_7/18)&\\
+ &z(h_4/12 + h_5/6 + h_6/4 + h_7/3)&\Big\}
\end{eqnarray*}
This expression achieves a maximum for $x = 50/73$, $y = 6/73$, and $z = 17/73$, yielding

$$\pi(R)/\pi(R^*) \geq {52(h_3 + h_4 + h_5 + h_6 + h_7)\over 219\gamma } = {52 \over 219\gamma  }$$

Algorithm WINDOW12 runs the basic repairman algorithm 12 times in the first phase, 6 times in the second phase, and 4 times in the third phase, for a total of 22 times.  Since the running time of the basic repairman algorithm is $\Gamma(n)$, the running time of WINDOW12 is $O(\Gamma(n))$.
\end{proof}

For handling requests whose windows are between 1 and 2, the best performance ratio that we have achieved is $219\gamma/52$.  To handle windows whose largest length $D$ is either greater than or less than 2, we have identified two additional techniques.

The first technique, applicable for $D > 2$, partitions the windows into sets such that the lengths within each set will be within a factor of 2 of each other.  We then run WINDOW12 on each such set and choose the best result.  This idea of partitioning windows into sets corresponding to lengths in ranges bounded by consecutive powers of 2 was also used in \cite{Bar-Yehuda}. 

The second technique, applicable to either $D < 2$ or $D > 2$, extends the approach of WINDOW12.  For $D$ sufficiently smaller than 2, we will use a proportionately smaller distance between the beginnings of periods, and, rather than use periods of length 1/2, 3/4, and 1, use periods of length 1/2, $1/2 + 1/2^k$, and $1/2 + 2/2^k$.  For $D$ sufficiently larger than 2, we will use more than three different period lengths.  For $D = 3$, for example, we would use period lengths of 1/2, 3/4, 1, 5/4, and 3/2.

As the number of period lengths increases, the time to consider various combinations grows exponentially.  It thus makes sense, when $D > 2$, to use a combination of the first and second techniques.  We will analyze the performance with this in mind.

We now present a generalized version of WINDOW12 called WINDOWG which can be applied to windows of any length between 1 and $1 + p/2^g$ for natural numbers $p$ and $g$.  Let $q = 1/2^{g + 1}$.
Let $P_0$ be a set of periods of length $q$.

For $i = 0$, $1$, $\ldots$, $p$, let $P_i^{(0)}$, $P_i^{(1)}$, $\ldots$, $P_i^{(i)}$ be sets of periods of length $(i + 2^g)q$, where the first period of $P_i^{(0)}$ begins at the same instant as the first period of $P_0$, and for each $j = 1$, 2, $\ldots$, $i$, the first period of $P_i^{(j)}$ begins $q$ after the first period of $P_i^{(j - 1)}$.

It is convenient to use the \emph{factorial number system} \cite[p.175]{Knuth}, which we review here.  In this system, a nonnegative integer is represented by a sequence of digits $d_u \ldots d_2 d_1$ where $d_i \in \{ 0$, $1$, $\ldots$, $i \}$.  The value of $v$ is $\sum_{i = 1}^u i!~d_i$.  Every value is uniquely represented, and it follows that $1\underbrace{0\ldots0}_{u - 1}$ is $u!$.

\begin{lemma}
For integers $g \geq 1$ and $p \geq 1$,  the running time of WINDOWG is $O( (p + g)!~\Gamma(n) )$.
\end{lemma}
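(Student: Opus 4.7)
The plan is to bound the number of REPAIR invocations and then multiply by the per-call cost $\Gamma(n)$. First I would classify invocations by phase, indexed by $i \in \{0, 1, \ldots, p\}$ and associated with period length $(i + 2^g)q$. In phase $i$, WINDOWG iterates over the $i + 1$ starting positions $P_i^{(0)}, \ldots, P_i^{(i)}$ and, for each, over all trimming configurations. Let $k_i$ denote the maximum number of subintervals of length $(i + 2^g)q$ that a window of length less than $1 + p/2^g$ can fully contain. Since each set $W_\ell$ (for $\ell = 1, 2, \ldots, k_i$) offers $\ell$ independent choices for which full subinterval to retain, the number of trimmings per (phase, start) pair is $\prod_{\ell = 1}^{k_i} \ell = k_i!$. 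Thus phase $i$ contributes $(i + 1)\, k_i!$ calls, giving the grand total $\sum_{i = 0}^p (i + 1)\, k_i!$.

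Next I would derive the key bound $k_i (i + 2^g) \leq 2(2^g + p)$, which follows from $k_i \cdot (i + 2^g)\, q < 1 + p/2^g$ together with $q = 2^{-(g+1)}$. In particular, for $i = 0$ this gives $k_0 \leq 1 + 2p/2^g$. Since $g \geq 1$ implies $2p/2^g \leq p$, and $p \geq 1$ gives $p \leq p + g - 1$, we conclude $k_0 \leq p + g$. Hence $k_0! \leq (p + g)!$, so the $i = 0$ phase alone already accounts for at most $(p+g)!$ REPAIR calls.

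The main obstacle is bounding the entire sum $\sum_{i = 0}^p (i + 1)\, k_i!$ by $O((p+g)!)$, since for small $p$ and $g$ the tail terms are not individually dominated by $k_0!$. Here the factorial number system introduced just before the lemma is the natural tool: I would injectively encode each tuple (phase $i$, start $j \in \{0, \ldots, i\}$, trimming selections $t_\ell \in \{1, \ldots, \ell\}$ for $\ell = 1, \ldots, k_i$) as a distinct factorial representation $d_{p+g-1} \ldots d_1$ with $d_\ell \in \{0, 1, \ldots, \ell\}$. The low-order digits $d_1, \ldots, d_{k_i}$ record the trimming choices directly (noting that $d_\ell$ has $\ell + 1$ admissible values while $t_\ell$ has only $\ell$, leaving room to also encode the phase index and start within the unused slack), while the remaining digits record $i$ and $j$; the bound $k_0 \leq p + g$ guarantees that $p + g - 1$ digit positions suffice. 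Since the total number of such representations is $(p+g)!$, the count of REPAIR invocations is at most $(p+g)!$.

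Multiplying by $\Gamma(n)$ for the per-call cost yields the claimed $O((p + g)!\,\Gamma(n))$ running time. The one step requiring real care is verifying the encoding: I would need to check that the slack of one unit per digit position is genuinely enough to absorb the $(p+1)$-fold choice of phase and the variable start range, and that distinct tuples are mapped to distinct representations.
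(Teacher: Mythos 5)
Your overall frame (count the REPAIR invocations, multiply by $\Gamma(n)$) is the paper's, and your inequality $k_i(i + 2^g) \leq 2(2^g + p)$ is sound, but the final counting step has a genuine gap: the total number of invocations is \emph{not} at most $(p+g)!$, so the injective encoding into factorial representations $d_{p+g-1} \ldots d_1$ that you propose cannot exist. Concretely, take $p = g = 1$, so $(p+g)! = 2$. Here $k_0 \leq 2$ and $k_1 \leq 1$, and even your own per-phase count $(i+1)\,k_i!$ gives $1 \cdot 2! + 2 \cdot 1! = 4 > 2$ tuples to encode; the pseudocode as written (which loops $j$ from $0$ to $i+1$, i.e.\ $i+2$ starts, and always iterates $k$ over all $(p+g-i)!$ factorial-system values, whether or not the resulting trimmings are distinct) makes $2 \cdot 2! + 3 \cdot 1! = 7$ calls. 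Indeed, phase $i = 0$ by itself already makes $(0+2)(p+g)! = 2(p+g)!$ calls, so the one-unit-per-digit slack can never absorb the phase and start choices — a worry you correctly flagged but then argued past. A secondary modeling slip: you count $k_i!$ \emph{distinct} trimming configurations per (phase, start) pair, but the algorithm runs REPAIR once per value of $k \in \{0, \ldots, (p+g-i)! - 1\}$, so an upper bound on its running time must use $(p+g-i)!$; your $k_i!$ undercounts the literal number of calls (harmless here only because $k_i \leq p+g-i$ makes it smaller).

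The lemma claims only $O((p+g)!\,\Gamma(n))$, and the correct repair — which is the paper's argument — is a geometric-decay estimate rather than an exact injection. The number of calls is at most $\sum_{i=0}^p (i+2)(p+g-i)!$, and the ratio $(p+g-i)!/(p+g)!$ is the reciprocal of a product of $i$ factors, each at least $p+g-i+1 \geq g+1 \geq 2$ (using $i \leq p$ and $g \geq 1$), hence at most $2^{-i}$. Therefore $\sum_{i=0}^p (i+2)\,(p+g-i)!/(p+g)! \leq \sum_{i=0}^{\infty} (i+2)\,2^{-i} = 6$, giving at most $6(p+g)!$ calls and the claimed bound with a constant uniform in $p$ and $g$. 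You already had the needed ingredient (your bound on $k_i$ plays the same role as $k_i \leq p+g-i$ does in validating the digit positions); the fix is to settle for a constant factor via this decaying sum instead of the quantitatively false bound of exactly $(p+g)!$.
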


\begin{proof}

\noindent From inspection, the running time of WINDOWG is proportional to at most $$\sum_{i = 0}^p (i + 2) \sum_{k = 0}^{(p + g - i)! - 1} \Gamma(n)~~=~~\Gamma(n)( p + g)! \sum_{i = 0}^p (i + 2){(p + g - i)! \over (p + g)!}$$

\noindent We bound the value of the summation by a constant.  Since $p + g - (i - 1) \geq g + 1 \geq 2$ for any $i \leq p$, 

$$\sum_{i = 0}^p (i + 2){(p + g - i)! \over (p + g)!} \leq \sum_{i = 0}^p {(i + 2) \over 2^i} < \sum_{i = 0}^\infty {(i + 2) \over 2^i} = 6$$
\end{proof}

\begin{table}[!htb]
\begin{tabular}{l}
\smallskip\\
\toprule
\textbf{WINDOWG} \\
\midrule
\tab For $i$ from $0$ to $p$ (consider periods of length $(i + 2^g)q$),\\
\tab \tab For $j$ from $0$ to $i + 1$ (consider sets $P_i^{(j)}$ of periods),\\
\tab \tab \tab For $k$ from $0$ to $(p + g - i)! - 1$ (choose trim positions), \\
\tab \tab \tab \tab Let $d_u\ldots d_1$ be the representation of $k$ in the factorial number system.\\
\tab \tab \tab \tab  Assume $d_0 = 0$.\\
\tab \tab \tab \tab  For each request $x$,\\
\tab \tab \tab \tab  \tab Let $v$ be the number of periods of length $(i + 2^g)q$ that $x$ has in $P_i^{(j)}$.\\
\tab \tab \tab \tab  \tab If $v > 0$ then\\
\tab \tab \tab \tab  \tab \tab Let $w = 1 + d_{v - 1}$.\\
\tab \tab \tab \tab  \tab \tab Trim the window of $x$ to its $w^{th}$ period of length $(i + 2^g)q$ in  $P_i^{(j)}$.\\
\tab \tab \tab \tab  \tab Else\\
\tab \tab \tab \tab  \tab \tab Exclude $x$ from this run.\\
\tab \tab \tab \tab  Run REPAIR and retain the best result so far.\\
\bottomrule
\end{tabular}
\end{table}

Note that when $p \geq 2$, there appears to be no benefit to having a value of $g > 1$ except where needed to specify a precise fractional value for maximum window length.  We do not suggest allowing $p$ or $g$ to range freely, since factorial growth is unacceptable.  However, we can fix values of $p$ and $g$ to give an appropriate running time and then partition the windows into sets such that the first partition contains windows whose lengths are between 1 and $1 + p/2^g$, the second set contains windows whose lengths are between $1 + p/2^g$ and $(1 + p/2^g)^2$, the third between $(1 + p/2^g)^2$ and $(1 + p/2^g)^3$, and so on.  Let $D$ be the ratio of the longest window length to the shortest.  Let $b$ be $1 + p/2^g$.  Then, there are $\log_b D$ such sets.  Let WINDOWGD be the algorithm that runs WINDOWG on each of the $\log_b D$ sets separately and returns the highest profit run found. 

\begin{theorem}
Let $p$ and $g$ be fixed and $b = 1 + p/2^g$.  Let $D$ be the ratio of the longest window length to the shortest.  Let the approximation ratio of WINDOWG for windows between $1$ and $b$ be a function of $p$ and $g$ given by $\rho(p,g)\gamma$.  The approximation ratio of WINDOWGD for windows of general length is $\rho(p,g)\gamma/\log_b D$.
\end{theorem}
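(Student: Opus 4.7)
The plan is a partition-plus-pigeonhole reduction: split the service requests into length classes in which the ratio of longest to shortest window length is at most $b$, apply WINDOWG to each class, and retain the best run. Since there are $\lceil \log_b D \rceil$ such classes, an averaging argument loses a factor of $\log_b D$ on top of the WINDOWG guarantee.

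First I would define, for $k = 0, 1, \ldots, \lceil \log_b D \rceil - 1$, the class $G_k$ of requests whose window lengths lie in $[\ell_{\min} b^k, \ell_{\min} b^{k+1})$, where $\ell_{\min}$ is the shortest window length in the instance. Because $D = \ell_{\max}/\ell_{\min}$, every request lands in exactly one $G_k$, and the ratio of the longest to the shortest window length inside any single $G_k$ is at most $b$.

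Next I would apply an averaging argument to an optimal service run $R^*$. Write $\pi^*_k$ for the profit $R^*$ collects from requests in $G_k$, so $\pi(R^*) = \sum_k \pi^*_k$. By pigeonhole some index $k^*$ satisfies $\pi^*_{k^*} \geq \pi(R^*)/\lceil \log_b D \rceil$. Restricting $R^*$ to the requests in $G_{k^*}$ yields a feasible run on the subinstance consisting only of $G_{k^*}$, so that subinstance has optimum profit at least $\pi^*_{k^*}$. I would then rescale time and distances by the common factor $1/(\ell_{\min} b^{k^*})$, which normalizes all windows in $G_{k^*}$ into the range $[1, b)$. Multiplying every distance, window start, and window end by the same positive constant preserves feasibility of any schedule (the repairman's unit speed is likewise rescaled implicitly) and the profits of any run, so the rescaled subinstance has the same optimum profit as the original. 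Applying WINDOWG to the rescaled subinstance returns a run whose profit is at least $\pi^*_{k^*}/(\rho(p,g)\gamma) \geq \pi(R^*)/(\rho(p,g)\gamma \lceil \log_b D \rceil)$, and since WINDOWGD outputs the best run across all $\lceil \log_b D \rceil$ invocations of WINDOWG, its output satisfies the same lower bound.

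The main obstacle I anticipate is making the rescaling step airtight. One must verify that WINDOWG's analysis depends only on the ratios of window lengths to each other and of windows to distances, not on any absolute unit: every numerical constant appearing in the algorithm (period length $q = 1/2^{g+1}$, period sizes $(i+2^g)q$, trim offsets) is defined relative to the target window-length range $[1, b)$, so a uniform positive scaling of the time axis is a pure relabeling that leaves the algorithm's output, its feasibility, and its approximation guarantee unchanged. Given this, the bound $\pi(R) \geq \pi(R^*)/(\rho(p,g)\gamma \lceil \log_b D \rceil)$ is exactly the claimed approximation guarantee.
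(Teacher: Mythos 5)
Your proposal is correct and matches the paper's own argument: WINDOWGD is exactly the partition of windows into $\log_b D$ length classes bounded by consecutive powers of $b$, followed by a pigeonhole step showing some class carries a $1/\log_b D$ fraction of the optimal profit, on which WINDOWG's guarantee then applies. Your explicit rescaling of each class into the range $[1,b)$ is a detail the paper leaves implicit, but it does not change the route of the proof.
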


\begin{proof}
Since the union of all of the $\log_b D$ disjoint sets is the set of all windows, one set must contain at least ${1/\log_b D}$ of the requests serviced by an optimal run.  Since the approximation ratio for each set is $\rho(p,g)\gamma$, the profit found on that set is at least $\rho(p,g)\gamma/\log_b D$.
\end{proof}

Values of $\rho(p,g)$ can be calculated for inputs $p$ and $g$ using a linear program.  For example, for maximum window sizes of 2, 3, and 4, the values of $\rho(p,q)$ are ${52\over 219} \approx .2374$, ${4954\over 24619} \approx .2012$, and ${258044\over 1427019} \approx .1808$, respectively.

\section{Deliveryman with Windows of Different Lengths}
\label{section:different-deliveryman}

In this section we expand our analysis for the deliveryman problem to windows with length between 1 and 2, and also to windows with length in any bounded range.  To unify notation, let $\Delta(n)$ represent the running time for deliveryman approximations using trimmed windows either on a metric graph or on a tree, as appropriate.  As shown in Theorems \ref{theorem:deliverytree} and \ref{theorem:deliverygraph}, $\Delta(n)$ is $O(n^3 \log {1\over \epsilon})$ for a tree and $O(n^3)$ for a metric graph.  Likewise, let the approximation ratios for the algorithms used after trimming be $\delta$, where $\delta = 1 + \epsilon$ for a tree and $\delta = 2$ for a metric graph, as shown by the same theorems.

We now consider the version of the deliveryman problem which allows the lengths of request time windows to range over the interval $[1, 2)$ instead of being confined to unit size.  We trim in exactly the same way we did for the repairman problem with time windows in this range.  Of course, trimming may increase the necessary speed of the best service tour, but by no more than a constant factor.

\begin{lemma}\label{lemma:newtrimming-deliveryman} Let $Q^*$ be an optimal service tour with respect to untrimmed requests whose lengths are in the interval $[1, 2)$.  There exists a service tour $Q$ with respect to trimmed requests such that $s(Q) \leq 6 s(Q^*)$.  
\end{lemma}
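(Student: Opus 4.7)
My plan is to generalize the racing construction from the Small Speedup Theorem (Theorem~\ref{theorem:deliverytrimming}), keeping periods of length $1/2$ but scaling the oscillation amplitude so that more of $Q^*$'s trajectory is accessible during each period. First I would fix a non-adaptive trimming rule: trim each window $[a,a+L)$ with $L \in [1,2)$ to its leftmost fully-contained half-period $[t_j, t_j+1/2)$, where $t_j = \lceil 2a \rceil / 2$. Because $L \geq 1$, this $t_j$ satisfies $t_j - a \in [0, 1/2)$ and $t_j + 1/2 \leq a + L$, so the trim is well-defined. A short calculation shows that if $Q^*$ visits the corresponding request at time $t^*$, then $t^* \in [a, a+L) \subseteq (t_j - 1/2, t_j + 2)$, an interval of width $5/2$---compared with width $3/2$ that sufficed in the unit-window case. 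This widening is the crux of the analysis and is exactly what will force the speed factor up from $4$ to $6$.

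Next, following the template of Theorem~\ref{theorem:deliverytrimming}, I would extend $Q^*$ backward for $t < 0$ and define $Q$ to race along $Q^*$ at speed $6\,s(Q^*)$ in a pattern that repeats every two periods: forward for one period (advancing $3$ parameter units of $Q^*$), backward $5/2$ parameter units, then forward $1/2$ parameter units. This yields net advance $1$ per two periods (matching $Q^*$) and total parameter distance $6$ per unit time (consistent with the claimed speed). Concretely, for even $i$ and $t_i \leq t < t_i + 1$ I would set
\[
q(t) = \begin{cases}
f(t_i - 1/2 + 6(t - t_i)) & t_i \leq t \leq t_i + 1/2,\\
f(t_i + 11/2 - 6(t - t_i)) & t_i + 1/2 \leq t \leq t_i + 11/12,\\
f(t_i - 11/2 + 6(t - t_i)) & t_i + 11/12 \leq t \leq t_i + 1,
\end{cases}
\]
so that during period $i$ the forward leg sweeps monotonically through parameter range $[t_i - 1/2, t_i + 5/2]$, while during period $i+1$ the backward then forward legs together sweep through $[t_i, t_i + 5/2]$.

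To finish, I would invert each linear formula to pick an explicit $\tau \in [t_j, t_j+1/2)$ with $q(\tau) = f(t^*)$ for every trimmed request. For even $j$ the forward-leg inversion yields a valid $\tau$ whenever $t^* \in [t_j - 1/2, t_j + 5/2]$; for odd $j$ the backward-leg inversion yields a valid $\tau$ whenever $t^* \in [t_j, t_j + 5/2]$. Both conditions are implied by $t^* \in (t_j - 1/2, t_j + 2)$ established in step one, so $Q$ services every trimmed request within its trimmed window, giving $s(Q) \leq 6\,s(Q^*)$. The main technical step is the calibration: the required coverage width of $5/2$ per period is binding in period $i+1$, forcing the backward-leg length to equal $5/2$, and the identity that total parameter distance per two periods equals $1 + 2\times(\text{backward leg length})$ then pins the speed at exactly $6\,s(Q^*)$.
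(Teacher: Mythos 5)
Your proposal is correct and is essentially the paper's own argument: the paper likewise races along $Q^*$ at speed $6\,s(Q^*)$ in a repeating pattern of forward one period, backward $5/6$ of a period, and forward $1/6$ of a period (exactly your $3$, $5/2$, $1/2$ parameter-unit legs), generalizing Theorem~\ref{theorem:deliverytrimming} so that $Q$ reaches each request's location during any of the five half-periods its window can meet---and you are in fact more explicit than the paper in pinning down the leftmost-fully-contained trimming rule, which the argument genuinely needs. One notational slip to repair: for odd $j$ the swept parameter range is $[t_j - 1/2,\; t_j + 2]$ (that is, $[t_{j-1}, t_{j-1} + 5/2]$, as your own description of period $i+1$ correctly states), not $[t_j, t_j + 5/2]$; with that correction your closing implication from $t^* \in (t_j - 1/2,\, t_j + 2)$ goes through exactly as claimed.
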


\begin{proof}
The proof is similar to that of Theorem \ref{theorem:deliverytrimming}, with minor changes to ensure that $Q^*$ hits all five, instead of what was previously three, intervals that correspond to the at most five .5 length periods with which a time window could intersect.  

We shall extend $Q^*$ backward for $t < 0$ by assuming that $Q^*$ proceeds from any convenient position so that it encounters the original starting position at time $t = 0$.  
Let racing now be movement, either forward or backward, along $Q^*$ at a speed of $6 s(Q^*)$ instead of $4 s(Q^*)$.  We define tour $Q$ which races along $Q^*$.  
During any two consecutive periods, the deliveryman 
will make a net advance equal to the advance of $Q^*$ over those two periods. 

Identify as $t_i$ the time $t = .5i$ which is also the starting time of period $i$. 
We define $Q$ as follows.  Start tour $Q$ at $t = 0$ at the location that $Q^*$ has at time $t = -.5$.  From there, tour $Q$ follows a repeating pattern of racing forward along $Q^*$ for 1 period, racing backward along $Q^*$ for ${5\over 6}$ periods, and racing forward again along $Q^*$ for a final ${1\over 6}$ periods. 
\end{proof}

Let DELIVERY be the appropriate deliveryman algorithm on trimmed windows, either for a tree or for a metric graph, described in Sect.~\ref{section:deliveryman-tree} or \ref{section:deliveryman-graph}.

\begin{theorem}
\label{theorem:delivery12}
In $O(\Delta(n))$ time DELIVERY finds a service tour of speed at most $6\delta$ times the optimal speed for windows with length between 1 and 2.
\end{theorem}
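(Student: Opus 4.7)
The plan is to prove Theorem~\ref{theorem:delivery12} by straightforward composition of the trimming bound from Lemma~\ref{lemma:newtrimming-deliveryman} with the approximation guarantee of the underlying DELIVERY algorithm on trimmed windows. I would first apply the trimming scheme: each request window of length in $[1,2)$ is replaced by a single fully contained period of length $1/2$ (any deterministic choice suffices, since Lemma~\ref{lemma:newtrimming-deliveryman} only needs existence of a good tour with respect to the trimmed requests). This preprocessing step requires no algorithmic work beyond a linear scan over the requests.

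Next, I would invoke DELIVERY (either DELIVERY-TREE or DELIVERY-GRAPH, as appropriate) on the trimmed instance. By Theorems~\ref{theorem:deliverytree} and \ref{theorem:deliverygraph}, the algorithm on trimmed (length $1/2$) windows returns a service tour whose speed is within a factor of $\delta$ of the speed of an optimal service tour on the trimmed instance, where $\delta = 1+\epsilon$ on a tree and $\delta = 2$ on a graph, and it does so in time $\Delta(n)$. Note that DELIVERY is agnostic about how the length-$1/2$ trimmed windows were produced, so it applies here without modification.

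It then remains to chain the two bounds. Let $Q^*$ be an optimal service tour for the original untrimmed instance, and let $\widetilde Q^*$ be an optimal service tour for the trimmed instance. Lemma~\ref{lemma:newtrimming-deliveryman} guarantees that $s(\widetilde Q^*) \leq 6\,s(Q^*)$, because the racing tour constructed in the proof of that lemma is a feasible trimmed tour of speed at most $6\,s(Q^*)$. The DELIVERY algorithm returns a tour $Q$ with $s(Q) \leq \delta\,s(\widetilde Q^*)$. Composing these inequalities yields $s(Q) \leq 6\delta\, s(Q^*)$, which is the claimed bound. The running time is dominated by the single call to DELIVERY, giving $O(\Delta(n))$.

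Since all of the technical ingredients are already in place, there is no significant obstacle in this argument; the only subtlety to flag is that, in contrast to the repairman case handled by WINDOW12, we do not need to enumerate multiple trimming choices or period sizes. This is because the existential statement of Lemma~\ref{lemma:newtrimming-deliveryman} holds for any fixed trimming to $1/2$-length periods, so a single invocation of DELIVERY on one trimmed instance suffices to achieve the $6\delta$ factor.
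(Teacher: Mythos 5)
Your proposal is correct and follows essentially the same route as the paper's proof: compose the factor-$6$ trimming bound of Lemma~\ref{lemma:newtrimming-deliveryman} with the post-trimming approximation factor $\delta$ from Theorems~\ref{theorem:deliverytree} and \ref{theorem:deliverygraph}, with the running time $O(\Delta(n))$ coming from a single invocation of DELIVERY. Your explicit chaining through the optimal trimmed tour $\widetilde Q^*$, and your remark that the racing-tour construction makes the lemma hold for any fixed trimming (so no enumeration of trimmings is needed, unlike WINDOW12), merely spell out what the paper leaves implicit.
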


\begin{proof}
Correctness follows from Theorems \ref{theorem:deliverytree} and \ref{theorem:deliverygraph}.  The factor of 6 for trimming given by Lemma \ref{lemma:newtrimming-deliveryman} multiplied by the appropriate approximation factor $\delta$ after trimming shows that DELIVERY returns a service tour $Q$ such that $s(Q) \leq 6\delta\hspace{1pt}s(Q^*)$, 
where $Q^*$ is an optimal service tour.  Since a deliveryman algorithm on trimmed windows is only run a single time, the running time is $O(\Delta(n))$.
\end{proof}

Observe that this pattern can be extrapolated to windows with arbitrary lengths.  As before, let $D$ be the ratio of the longest time window to the shortest.  In the event that the ratio is not an integer, let $D$ be the ceiling of the ratio.  Let the speed be $2D + 2$.  Define $Q$ as a generalization of the previous definition.  Start tour $Q$ at $t = 0$ at the location that $Q^*$ has at time $t = -.5$.  From there, tour $Q$ follows a repeating pattern of racing forward along $Q^*$ for 1 period, racing backward along $Q^*$ for ${2D + 1\over 2D + 2}$ periods, and racing forward again along $Q^*$ for a final ${1\over 2D + 2}$ periods.  Clearly, $Q$ will hit all $2D + 1$ intervals of length .5 that a window can intersect with while making the required progress of $(2D + 2)\Big(1 - {2D + 1\over 2D + 2} + {1\over 2D + 2}\Big) = 2$ periods.

\bibliography{bibliography}

\end{document}